\newcommand{\diag}{\mathop{\rm diag}}
\newcommand{\Tr}{\mathop{\rm tr}}
\newtheorem{remark}{Remark}
\newtheorem{lemma}{Lemma}
\newtheorem*{theorem}{Theorem}
\newtheorem{corollary}{Corollary}
\newtheorem*{proposition}{Proposition}
\algnewcommand\INPUT{\item[\textbf{Input:}]}
\algnewcommand\OUTPUT{\item[\textbf{Output:}]}
\begin{document}

\title{Optimal Power Allocation and Active Interference Mitigation for Spatial Multiplexed MIMO Cognitive Systems}

\author{Nikolaos~I.~Miridakis, Minghua~Xia, \IEEEmembership{Member,~IEEE} and Theodoros~A.~Tsiftsis,~\IEEEmembership{Senior Member,~IEEE}
\thanks{N. I. Miridakis is with the Department of Computer Systems Engineering, Piraeus University of Applied Sciences, 12244, Aegaleo, Greece (e-mail: nikozm@unipi.gr).}
\thanks{M. Xia is with the School of Electronics and Information Technology, Sun Yat-sen University, Guangzhou, 510006, China (e-mail: xiamingh@mail.sysu.edu.cn).}
\thanks{T. A. Tsiftsis is with the School of Engineering, Nazarbayev University, Astana 010000, Kazakhstan (e-mail: theodoros.tsiftsis@nu.edu.kz).}
}


\maketitle

\begin{abstract}
In this paper, the performance of an underlay multiple-input multiple-output (MIMO) cognitive radio system is analytically studied. In particular, the secondary transmitter operates in a spatial multiplexing transmission mode, while a zero-forcing (ZF) detector is employed at the secondary receiver. Additionally, the secondary system is interfered by multiple randomly distributed single-antenna primary users (PUs). To enhance the performance of secondary transmission, optimal power allocation is performed at the secondary transmitter with a constraint on the interference temperature (IT) specified by the PUs. The outage probability of the secondary receiver is explicitly derived in an exact closed-form expression. Also, some special cases of practical interest, including co-located PUs and massive MIMO, are discussed. Further, to mitigate instantaneous excessive interference onto PUs caused by the time-average IT, an iterative antenna reduction algorithm is developed for the secondary transmitter and, accordingly, the average number of transmit antennas is analytically computed. Extensive numerical and simulation results corroborate the effectiveness of our analysis. 
\end{abstract}

\begin{IEEEkeywords}
Cognitive radio (CR), interference, multiple-input multiple-output (MIMO), optimal power optimization, spatial multiplexing, zero-forcing (ZF) detection.
\end{IEEEkeywords}

\IEEEpeerreviewmaketitle

\section{Introduction}
\IEEEPARstart{C}{ognitive} radio (CR) is widely recognized as a promising technique to resolve the issue of spectrum scarcity, caused by the explosive growth of wireless data traffic. Among the three major paradigms to deploy CR in practice (i.e., interweave, overlay and underlay), underlay CR allows simultaneous transmissions of primary users (PUs) and secondary users (SUs), as well as low implementation complexity \cite{XiaCM13}. In a practical underlay CR system, to guarantee the quality of service (QoS) of PUs granted spectrum resources, the transmit (Tx) power of SUs with no fixed spectrum resources is strictly limited, such that the harmful interference from SUs to PUs remains below a prescribed tolerable level. To improve the performance of secondary transmission, multiple-input multiple-output (MIMO) and even massive MIMO antenna techniques can be explored since they provide additional degrees of freedom (DoF) in spatial domain, compared with traditional single-input single-output (SISO) transmission. When MIMO antenna was integrated into underlay CR systems, the spatial diversity gain of MIMO was widely used to enhance the reliability of secondary transmission, see e.g., \cite{j:Sarvendranath2013, c:XiongMukherjee2015} and references therein. On the other hand, the spatial multiplexing gain of MIMO was exploited to improve the data rate of secondary transmission, see e.g., \cite{c:YangQaraqe2013}. 

To avoid harmful interference onto PUs, various beamforming and/or power control strategies were proposed for secondary transmitters (STs). For instance, in \cite{j:NoamGoldsmith2013} the beamforming strategy of ST was carefully designed such that the SU transmits in the null space of the interference channel to its nearby PU. On the other hand, due to extreme difficulty to acquire perfect channel state information (CSI) in cognitive underlay systems where PUs are generally reluctant to cooperate with SUs, most power control strategies resorted to second-order CSI statistics, e.g., time-average channel gains. For instance, in \cite{c:RopokisBerberidis} a limited feedback solution was provided to optimal power allocation while in \cite{j:GopalakrishnanSidiropoulos2015} only binary and infrequent CSI was exploited to design beamforming algorithm. Interestingly, it was demonstrated in \cite{j:TourkiKhan2014} that the average CSI based power allocation strategy outperforms that based on instantaneous CSI, given that a low interference power constraint is dictated by PUs. However, if the time-average CSI pertaining to the channel from ST to primary receiver (PR) is exploited when optimal power allocation is performed at ST, it will inevitably introduce unexpected instantaneous excessive interference to PUs. In other words, instantaneous CSI-based power allocation strategy can guarantee that PUs are always free of excessive interference (i.e., the real interference is always no larger than the prescribed tolerable interference), whereas the average CSI-based power allocation strategy cannot. 

In the aforementioned works, neither the detrimental effect of inter-system interference nor a power allocation optimization of the secondary transmission were considered. In this paper, capitalizing on the latter observation, an underlay MIMO CR system is studied, where ST sends multiple parallel data streams to its corresponding receiver equipped with a zero-forcing (ZF) detector, under the constraint of tolerable interference power dictated by PUs. As well-known, ZF detector manifests a good performance-complexity tradeoff \cite{j:MiridakisSurvey}, compared with minimum mean-squared error (MMSE) detector or alike. Meanwhile, several single-antenna PUs are active and randomly located in the vicinity of the secondary users (the scenario of co-located multiple-antenna PUs is also considered as a special case). It is assumed that the secondary receiver (SR) is aware of perfect CSI only between itself and ST whereas the time-average channel gains between ST and its adjacent PR are available at SR. In summary, four major contributions of this work are as follows.
\begin{itemize}
	\item A new simple power allocation scheme is designed for ST in underlay MIMO CR systems. 
	\item Outage probability of the secondary transmission is explicitly derived, under independent Rayleigh fading channels, with extensive discussions of some special cases of practical interest, namely, co-located PUs and massive MIMO.
	\item To mitigate excessive interference onto PUs, an iterative antenna reduction algorithm is developed. 
	\item Based on the developed algorithm, the average number of active secondary Tx antennas is explicitly computed. 
\end{itemize}

To detail the aforementioned contributions, the rest of this paper is organized as follows. Section \ref{System Model} describes the system model. Section \ref{Optimal Power Allocation of the Secondary System} devises an optimal power allocation at ST. Section \ref{Performance Analysis of the Secondary System} analyzes the outage probability of the secondary transmission. Also, several special cases of practical interest are discussed. Next, to mitigate unexpected excessive interference to PUs, Section \ref{secimpact} develops an iterative antenna reduction algorithm. Afterwards, Section \ref{Numerical Results} presents simulation results compared with numerical ones, while Section \ref{Conclusion} concludes the paper. Some detailed derivations are relegated to Appendix.

{\bf Notation}: Vectors and matrices are denoted by lowercase and uppercase bold symbols (e.g., $\mathbf{x}$ and $\mathbf{X}$), respectively. The superscripts $(\cdot)^{-1}$,  $(\cdot)^{\dagger}$, $(\cdot)^{T}$ and $(\cdot)^{\mathcal{H}}$ means the inverse, pseudo-inverse, transpose and conjugate transpose, respectively. $\Tr[\mathbf{X}]$ calculates the trace of $\mathbf{X}$. $\mathbf{x}_{i}$ denotes the $i^{\rm th}$ entry of $\mathbf{x}$ while $[\mathbf{X}]_{ij}$ stands for the $(i, j)$ element of $\mathbf{X}$. $\diag\{x_{i}\}^{n}_{i=1}$ means a diagonal matrix with entries $x_{1}, \cdots, x_{n}$.  The operator $(x)^{+}$ equals $x$ if $x>0$, and zero otherwise. $|x|$ takes the absolute value of $x$ while $\|\mathbf{x}\|$ is the Euclidean norm of $\mathbf{x}$. $\mathbf{I}_{v}$ stands for the identity matrix of size $v \times v$. $\mathbb{E}[\cdot]$ is the expectation operator. The symbol $\overset{\text{d}}=$ means equality in distribution. The functions $f_{X}(\cdot)$, $F_{X}(\cdot)$ and $\overline{F}_{X}(\cdot)$ represent probability density function (PDF), cumulative distribution function (CDF) and complementary CDF (CCDF) of a random variable (RV) $X$, respectively. Complex-valued Gaussian RVs with mean $\mu$ and variance $\sigma^{2}$ is denoted as $\mathcal{CN}(\mu,\sigma^{2})$ while central chi-squared RVs with $v$ DoF as $\mathcal{X}^{2}_{v}$. Finally, $\Gamma(\cdot)$ denotes the Gamma function \cite[Eq. (8.310.1)]{tables} and $\Gamma(\cdot,\cdot)$ is the upper incomplete Gamma function \cite[Eq. (8.350.2)]{tables}.

\section{System Model}
\label{System Model}
As illustrated in Fig.~\ref{fig1}, we investigate an underlay MIMO communication system in the context of CR, where $M$ and $N$  antennas are equipped at ST and SR, respectively, with $N\geq M$. The Tx antennas operate in a spatial multiplexing mode and $M$ independent data streams are simultaneously transmitted in a given time instance. ZF detection is adopted at the receiver side. On the other hand, there are $L_{T}$ primary transmitters (PT) communicating with $L_{R}$ PRs, each with a single Tx/Rx antenna. Independent Rayleigh channel fading conditions are assumed for all the involved links. Also, both ST and SR are able to acquire statistical (second-order) CSI with respect to the channel gains of the primary system,\footnote{In principle, CSI of the links between the primary and secondary nodes can be obtained through a feedback channel from the primary service or via a band manager that mediates the exchange of information between the primary and secondary networks \cite{XiaCM13,j:HongBan2012}.} while perfect CSI is assumed regarding the channel gains between ST and SR.

\begin{figure}[!t]
\centering
\includegraphics[keepaspectratio,width=2.8in]{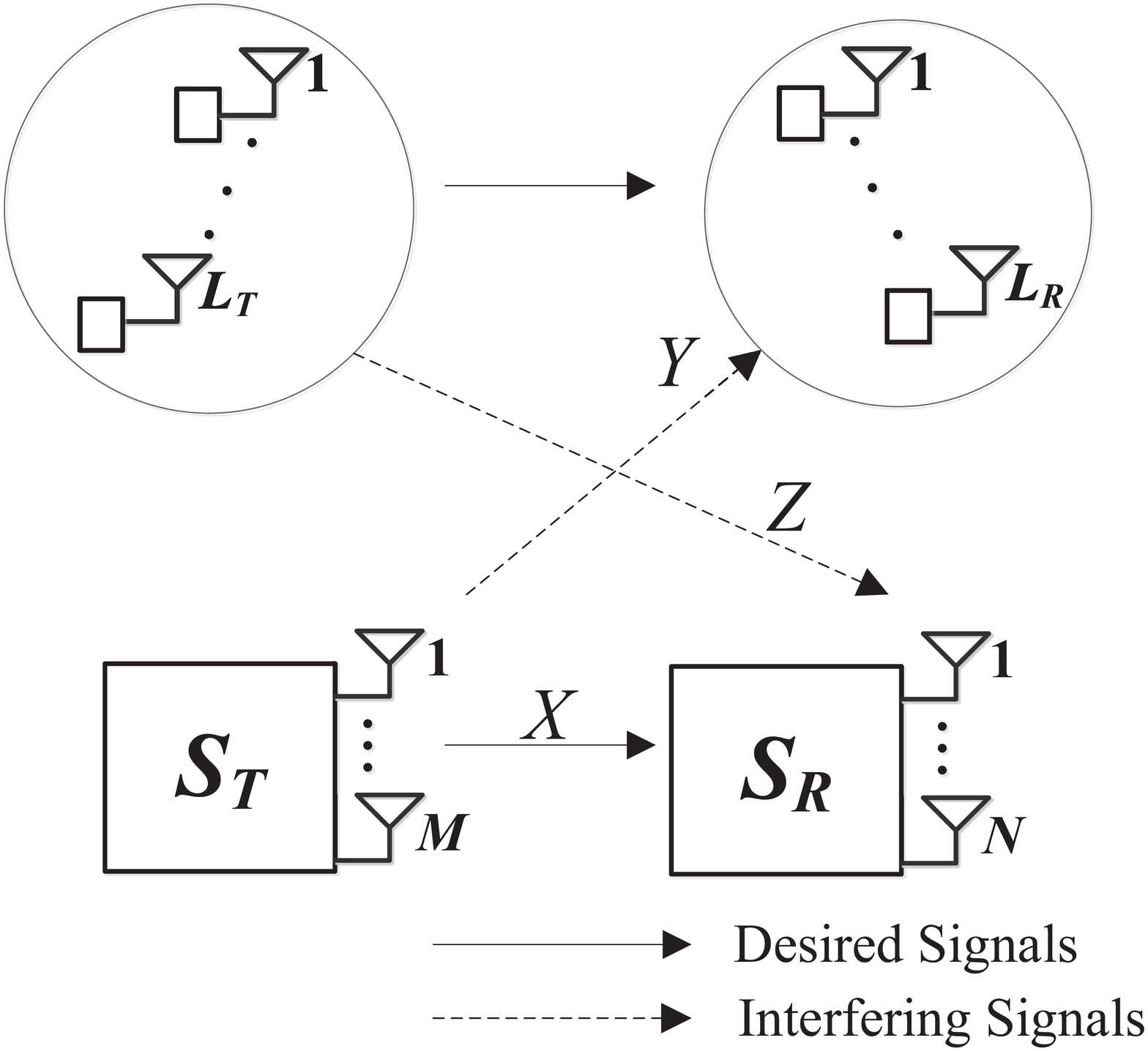}
\caption{The considered system configuration with $L_{T}$ PT and $L_{R}$ PRs, where $S_{T}$ and $S_{R}$ stand for the secondary transmitter and receiver, respectively. The parameters $X$, $Y$ and $Z$ denote the involved channel gains between the primary and secondary system, which are explicitly defined in Section \ref{Optimal Power Allocation of the Secondary System}.}
\label{fig1}
\end{figure}

The received signal at SR is given by
\begin{equation}
\mathbf{r} 
= \mathbf{H}\mathbf{P}^{\frac{1}{2}}\mathbf{s}+\sqrt{p_{\rm p}}\mathbf{H}_{\text{p}}\mathbf{s}_{\text{p}}+\mathbf{w},
\label{eq1}
\end{equation}
where $\mathbf{H}\in \mathbb{C}^{N\times M}$ denotes the desired channel from ST to SR, $\mathbf{s} \in \mathbb{C}^{M\times 1}$ represents the transmitted signals from the secondary source, $\mathbf{H}_{\text{p}}\in \mathbb{C}^{N\times L_{T}}$ stands for the interfering channel from PTs to SR, $\mathbf{s}_{\text{p}} \in \mathbb{C}^{L_{T}\times 1}$ is the transmitted signals from PTs, and $\mathbf{w} \in \mathbb{C}^{N\times 1}$ models the additive white Gaussian noise (AWGN) at SR. Moreover, $\mathbf{P} \in \mathbb{R}^{M\times M} = \diag\{p_{i}\}^{M}_{i=1}$ is a diagonal matrix with $p_i$ being the optimal Tx power at the $i^{\rm th}$ antenna of ST (to be explicitly determined in Section~\ref{PowerAllocation}), $p_{\rm p}$ is a constant used to denote the fixed Tx power at each PT, and $\mathbf{w}\overset{\rm d} = \mathcal{CN}(\mathbf{0}, N_{0}\mathbf{I}_{N})$ with $N_{0}$ being the AWGN variance. Without loss of generality, the Tx power of signals at either secondary or PTs are normalized, i.e., $\mathbb{E}[\mathbf{s}\mathbf{s}^{\mathcal{H}}] = \mathbf{I}_{M}$ and $\mathbb{E}[\mathbf{s}_{\text{p}}\mathbf{s}_{\text{p}}^{\mathcal{H}}] = \mathbf{I}_{L_{T}}$.

Based on the principle of ZF detection, an estimation of the transmitted symbol vector can be written as
\begin{equation}
\mathbf{r}^{\prime} 
 \triangleq \mathbf{G}^{\dagger} \mathbf{r}
 = \mathbf{s}+\sqrt{p_{\rm p}}\mathbf{G}^{\dagger}\mathbf{H}_{\text{p}}\mathbf{s}_{\text{p}}+\mathbf{G}^{\dagger}\mathbf{w},
\label{rzf}
\end{equation}
where 
\begin{equation}
\mathbf{G}^{\dagger} 
\triangleq (\mathbf{H}\mathbf{P}^{\frac{1}{2}})^{\dagger}
 = \left((\mathbf{H}\mathbf{P}^{\frac{1}{2}})^{\mathcal{H}}\mathbf{H}\mathbf{P}^{\frac{1}{2}}\right)^{-1}(\mathbf{H}\mathbf{P}^{\frac{1}{2}})^{\mathcal{H}}.
\end{equation}

\begin{remark}
It is noteworthy that Fig.~\ref{fig1} illurstrates a general case of primary transmission, where $L_{T}$ PTs and $L_{R}$ receivers are scattered and operate like a distributed MIMO system. This scenario includes the typical MIMO link with $L_{T}$ Tx and $L_{R}$ Rx antennas as a special case. The latter case is analyzed in Lemma 3 and Corollary 2.
\end{remark}

\section{Optimal Power Allocation at the Secondary Transmitter}
\label{Optimal Power Allocation of the Secondary System}
In this section, we start with formulating the signal to interference plus noise ratio (SINR) of each received secondary stream. Then, the optimal power allocation at ST is analytically presented.

\subsection{SINR Analysis}
After performing ZF detection as per \eqref{rzf}, the received SINR of the $i^{\rm th}$ secondary transmitted data stream, is given by
\begin{equation}
{\rm SINR}_{i} 
= \frac{1}{p_{\rm p}\left\|\left[\mathbf{G}^{\dagger}\right]_{i}\mathbf{H}_{\text{p}}\right\|^{2} + N_{0}\left\|\left[\mathbf{G}^{\dagger}\right]_{i}\right\|^{2}},
\label{sinri}
\end{equation}
where $[\mathbf{G}^{\dagger}]_{i}$ denotes the $i^{\rm th}$ row of $\mathbf{G}^{\dagger}$. Due to the high complexity of \eqref{sinri}, an exact analysis of ${\rm SINR}_{i} $ is almost mathematically intractable. For ease of further proceeding, the following Lemma~\ref{Lemma-1} reformulates \eqref{sinri} in a more tractable way. 
\begin{lemma}
\label{Lemma-1}
The received SINR of the $i^{\rm th}$ secondary transmitted data stream given by \eqref{sinri} is distributed as
\begin{equation}
{\rm SINR}_{i}\overset{\rm d} 
= \frac{p_{i}X_{i}}{p_{\rm p}Z+N_{0}},
\label{sinridistr}
\end{equation} 
where $X_{i}$ and $Z$ are mutually independent RVs. Moreover,
\begin{equation}
X_{i} \triangleq \sum^{N-M+1}_{l=1}|h^{(i)}_{l}|^{2},
\label{xidistr}
\end{equation} 
with $|h^{(i)}_{l}|^{2}$ being the channel gain from the $i^{\rm th}$ secondary Tx antenna to the $l^{\rm th}$ secondary Rx antenna, and
\begin{equation}
Z \triangleq p_{\rm p}\sum^{L_{T}}_{j=1}|z_{j}|^{2}, 
\label{zdistr}
\end{equation}
with $|z_{j}|^{2}$, $\forall j\in [1, L_{T}]$, are  independent and non-identically distributed (i.n.i.d.) exponential RVs.
\end{lemma}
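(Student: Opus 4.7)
The plan is to express both norms $\|[\mathbf{G}^{\dagger}]_i\|^2$ and $\|[\mathbf{G}^{\dagger}]_i\mathbf{H}_{\text{p}}\|^2$ in terms of a single ``residual'' quantity $\|\mathbf{u}_i\|^2$ with $\mathbf{u}_i\triangleq[\mathbf{G}^{\dagger}]_i$, so that the denominator of \eqref{sinri} factors as $\|\mathbf{u}_i\|^2\cdot(p_{\rm p}\tilde{Z}+N_0)$ with $\tilde{Z}$ independent of $\|\mathbf{u}_i\|^2$. Inverting then delivers \eqref{sinridistr}. The two ingredients are (i) the classical distributional identity for the diagonal entries of $(\mathbf{H}^{\mathcal{H}}\mathbf{H})^{-1}$ under Rayleigh fading, and (ii) the rotational invariance of circularly-symmetric complex Gaussian vectors.

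First I would peel the power matrix out of the pseudo-inverse. Setting $\tilde{\mathbf{H}}\triangleq\mathbf{H}\mathbf{P}^{1/2}$ and using $\mathbf{G}^{\dagger}(\mathbf{G}^{\dagger})^{\mathcal{H}}=(\tilde{\mathbf{H}}^{\mathcal{H}}\tilde{\mathbf{H}})^{-1}$, one gets $\|\mathbf{u}_i\|^2=[(\tilde{\mathbf{H}}^{\mathcal{H}}\tilde{\mathbf{H}})^{-1}]_{ii}$. Because $\mathbf{P}^{1/2}$ is real diagonal, $\tilde{\mathbf{H}}^{\mathcal{H}}\tilde{\mathbf{H}}=\mathbf{P}^{1/2}\mathbf{H}^{\mathcal{H}}\mathbf{H}\mathbf{P}^{1/2}$ and hence $\|\mathbf{u}_i\|^2=p_i^{-1}[(\mathbf{H}^{\mathcal{H}}\mathbf{H})^{-1}]_{ii}$. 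The well-known ZF result for i.i.d.\ $\mathcal{CN}(0,1)$ entries then says that $1/[(\mathbf{H}^{\mathcal{H}}\mathbf{H})^{-1}]_{ii}$ equals in distribution a sum of $N-M+1$ i.i.d.\ unit-mean exponentials, thereby identifying $1/(p_i\|\mathbf{u}_i\|^2)$ with the variable $X_i$ defined in \eqref{xidistr}.

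Next I would decompose the interference norm columnwise, $\|\mathbf{u}_i\mathbf{H}_{\text{p}}\|^2=\sum_{j=1}^{L_T}|\mathbf{u}_i\mathbf{h}_{\text{p},j}|^2$, where $\mathbf{h}_{\text{p},j}$ is the $j$-th column of $\mathbf{H}_{\text{p}}$. Conditioning on $\mathbf{H}$ (equivalently, on $\mathbf{u}_i$) and exploiting the independence of $\mathbf{H}_{\text{p}}$ from $\mathbf{H}$, each inner product $\mathbf{u}_i\mathbf{h}_{\text{p},j}$ is a linear combination of independent zero-mean complex Gaussians and is therefore $\mathcal{CN}(0,\sigma_j^2\|\mathbf{u}_i\|^2)$, where $\sigma_j^2$ denotes the large-scale gain of the $j$-th PT link. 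Consequently $|\mathbf{u}_i\mathbf{h}_{\text{p},j}|^2\overset{\rm d}{=}\|\mathbf{u}_i\|^2|z_j|^2$ with $|z_j|^2$ being exponential of mean $\sigma_j^2$, and the normalized variables $|z_j|^2$ are mutually independent as well as independent of $\mathbf{u}_i$. Summing in $j$ gives $\|\mathbf{u}_i\mathbf{H}_{\text{p}}\|^2\overset{\rm d}{=}\|\mathbf{u}_i\|^2\sum_{j=1}^{L_T}|z_j|^2$, with the right-hand side i.n.i.d.\ exponential as announced in \eqref{zdistr}.

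Plugging the two pieces into \eqref{sinri} factors the denominator as $\|\mathbf{u}_i\|^2\bigl(p_{\rm p}\sum_{j}|z_j|^2+N_0\bigr)$; taking the reciprocal and substituting the distributional identification of $1/(p_i\|\mathbf{u}_i\|^2)$ with $X_i$ yields exactly \eqref{sinridistr}. The step I expect to require the most care is the independence claim between $X_i$ and $Z$: although both are built from the same random row $\mathbf{u}_i$, conditioning on $\mathbf{H}$ shows that the law of $\mathbf{u}_i\mathbf{h}_{\text{p},j}/\|\mathbf{u}_i\|$ does not depend on $\mathbf{u}_i$, which is precisely what decouples $Z$ from $X_i$. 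The remainder is bookkeeping with the standard ZF projection identity.
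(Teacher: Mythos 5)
Your proposal is correct and follows essentially the same route as the paper's Appendix A: factor $\|[\mathbf{G}^{\dagger}]_{i}\|^{2}$ out of the denominator of \eqref{sinri}, identify $1/(p_i\|[\mathbf{G}^{\dagger}]_{i}\|^{2})$ with the Erlang/chi-squared variable $X_i$ via the classical ZF diagonal-entry result (which the paper re-derives through the projection-matrix eigendecomposition and Gaussian isotropy), and show that the normalized interference $\|[\mathbf{G}^{\dagger}]_{i}\mathbf{H}_{\text{p}}\|^{2}/\|[\mathbf{G}^{\dagger}]_{i}\|^{2}$ is a sum of $L_T$ i.n.i.d. exponentials independent of $[\mathbf{G}^{\dagger}]_{i}$ by conditional Gaussianity. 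The only cosmetic difference is that the paper spells out the proof of the ZF identity you cite, and your ``unit-mean'' exponentials should carry the average secondary channel gain $\mathbb{E}[X]$, which does not affect the argument.
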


\begin{IEEEproof}
The proof is relegated in Appendix \ref{appSINRdistr}.
\end{IEEEproof}

\subsection{Power Allocation}
\label{PowerAllocation}
Since multiple Tx antennas of ST operate in the spatial multiplexing manner, maximizing the data rate of secondary transmission is equivalent to maximizing the achievable data rate of each secondary data stream. In turn, this can be achieved by proportionally maximizing the corresponding Tx power at each antenna. Yet, this Tx power should not exceed a predefined threshold dictated by primary users, which is widely known as \emph{interference temperature} (IT). Since only statistical CSI of secondary-to-primary links (and vice-versa given that channel reciprocity is assumed) is available, an average IT threshold, $Q$, such that $\mathbb{E}[\sum^{M}_{i=1}p_{i}|y_{i}|^{2}] \leq Q, \ \forall p_{i}$ should be satisfied, while it holds that
\begin{equation}
|y_{i}|^{2} \triangleq \max_{j}\left\{\left|y^{(i)}_{j}\right|^{2}\right\}^{L_{R}}_{j=1},
\label{yii}
\end{equation}
where $|y^{(i)}_{j}|^{2}$ denotes the instantaneous interfering power caused by the $i^{\rm th}$ Tx antenna of ST to the $j^\text{th}$ PR. Since the distance between two Tx antennas at ST is negligible as compared with the distance between ST and any PR, we have that $\mathbb{E}[Y_{1}] =\mathbb{E}[Y_{2}] = \cdots = \mathbb{E}[Y_{M}] \triangleq \mathbb{E}[Y]$, where $\mathbb{E}[Y] \triangleq \mathbb{E}[|y_{i}|^{2}]$ with $|y_{i}|^{2}$ defined in \eqref{yii}. Thereby, by invoking the property of linearity for expected values, the aforementioned constraint with respect to IT becomes $\mathbb{E}[\sum^{M}_{i=1}p_{i}|y_{i}|^{2}] \leq Q\Rightarrow \sum^{M}_{i=1}p_{i}\mathbb{E}[Y] \leq Q$. 

On the other hand, it is more likely that primary nodes keep arbitrary distances from ST (cf. Fig.~\ref{fig1}), which in turn yields non-identical link statistics from different primary nodes to ST. 

Consequently, the optimization problem of Tx power allocation for ST can be formulated as
\begin{subequations}
\begin{align}
\mathcal{P}_{1}: \ &\max \sum^{M}_{i=1}\log_{2}\left(1+\frac{p_{i}X_{i}}{p_{\rm p}\mathbb{E}[Z]+N_{0}}\right)  \label{Opt_2} \\
&\ \text{s.t.}\ \ \sum^{M}_{i=1}p_{i} \, \mathbb{E}\left[Y\right]\leq Q, \quad \forall p_{i} \geq 0.
\label{optproblem1}
\end{align}
\end{subequations}
Obviously, the above formulation is suboptimal since only $\mathbb{E}[Z]$ is exploited by ST instead of its instantaneous counterpart. Nonetheless, this approach is suitable for practical applications as it is usually hard and/or costly to obtain instantaneous CSI in real-life wireless networks. 

To facilitate the subsequent analysis, the following Lemmas \ref{Lemma-2} and \ref{Lemma-3} formalize some key results regarding the statistics of $Y$, corresponding to different channel fading characteristics. 

\begin{lemma}
\label{Lemma-2}
In the case of i.n.i.d. fading channels, $\mathbb{E}[Y]$ can be analytically given by
\begin{eqnarray}
\mathbb{E}[Y]
& = &\sum^{L_{R}}_{l=1}\sum^{L_{R}}_{k=0}\underbrace{\sum^{L_{R}}_{n_{1}=1}\cdots \sum^{L_{R}}_{n_{k}=1}}_{n_{1}\neq \cdots \neq n_{k}\neq k}\frac{(-1)^{k}}{k!}  \nonumber \\
&     &{}\times \frac{1}{\mathbb{E}[Y^{(l)}]\left(\frac{1}{\mathbb{E}[Y^{(l)}]}+\sum\limits^{k}_{t=1}\frac{1}{\mathbb{E}[Y^{(n_{t})}]}\right)^{2}},
\label{EY}
\end{eqnarray}
where $\mathbb{E}[Y^{(l)}]$ denotes the average interfering channel gain from ST to the $l^\text{th}$, $\forall l \in [1, L_{R}]$, PR. In addition, the average total interfering channel gain from PTs to SR, i.e., $\mathbb{E}[Z]$, stems as
\begin{equation}
\mathbb{E}[Z] 
= \sum^{L_{T}}_{k=1}\left(\prod^{k}_{j=1,j\neq k}\frac{\mathbb{E}[Z_{k}]}{\mathbb{E}[Z_{k}]-\mathbb{E}[Z_{j}]}\right)\mathbb{E}[Z_{k}],
\label{EZ}
\end{equation}
where $\mathbb{E}[Z_{l}]$ denotes the average interfering channel gain from the $l^\text{th}$, $\forall l \in [1, L_{T}]$, PT to SR.
\end{lemma}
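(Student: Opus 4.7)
My plan is to treat the two assertions separately, since they concern different functionals of i.n.i.d. exponential random variables — a maximum for $\mathbb{E}[Y]$ and a sum for $\mathbb{E}[Z]$.

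For $\mathbb{E}[Y]$, I would start from the definition in \eqref{yii}. Under Rayleigh fading the squared gains $|y_j^{(i)}|^2$ are independent exponentials with rates $\lambda_j = 1/\mathbb{E}[Y^{(j)}]$, so the CDF of $Y$ factorizes as $F_Y(y)=\prod_{j=1}^{L_R}(1-e^{-\lambda_j y})$. Differentiating via the standard order-statistics identity gives
\begin{equation}
f_Y(y)=\sum_{l=1}^{L_R}\lambda_l e^{-\lambda_l y}\prod_{\substack{k=1\\ k\neq l}}^{L_R}\bigl(1-e^{-\lambda_k y}\bigr).
\end{equation}
I would then expand the inner product over $k\neq l$ using the binomial/inclusion–exclusion identity $\prod_{k\neq l}(1-e^{-\lambda_k y})=\sum_{S\subseteq\{1,\ldots,L_R\}\setminus\{l\}}(-1)^{|S|}\exp(-\sum_{k\in S}\lambda_k y)$, and rewrite the sum over unordered subsets $S$ of size $k$ as $1/k!$ times a sum over ordered $k$-tuples $(n_1,\dots,n_k)$ of pairwise distinct indices not equal to $l$. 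Computing $\mathbb{E}[Y]=\int_0^\infty y\,f_Y(y)\,dy$ then reduces each term to $\int_0^\infty y e^{-\alpha y}dy=1/\alpha^{2}$ with $\alpha=\lambda_l+\sum_{t=1}^k\lambda_{n_t}$, which after substituting $\lambda_l=1/\mathbb{E}[Y^{(l)}]$ yields exactly \eqref{EY}.

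For $\mathbb{E}[Z]$, I would use the well-known partial-fraction decomposition of the MGF of a sum of i.n.i.d.\ exponentials: if $Z_k\sim\mathrm{Exp}(\mu_k)$ with distinct $\mu_k=1/\mathbb{E}[Z_k]$, then the PDF of $Z=\sum_{k=1}^{L_T}Z_k$ admits
\begin{equation}
f_Z(z)=\sum_{k=1}^{L_T}\Biggl(\prod_{\substack{j=1\\ j\neq k}}^{L_T}\frac{\mu_j}{\mu_j-\mu_k}\Biggr)\mu_k e^{-\mu_k z},
\end{equation}
obtained by inverting the MGF $\prod_k \mu_k/(\mu_k-s)$ via residues. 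Integrating $z f_Z(z)$ produces $1/\mu_k$ on the $k$-th term, and rewriting the ratio $\mu_j/(\mu_j-\mu_k)=\mathbb{E}[Z_k]/(\mathbb{E}[Z_k]-\mathbb{E}[Z_j])$ gives the closed form stated in \eqref{EZ}.

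The routine pieces are the exponential integrals and the MGF inversion. The main obstacle, and the only step that requires care, is the combinatorial bookkeeping for $\mathbb{E}[Y]$: correctly translating the unordered subset sum from the inclusion–exclusion expansion into the ordered multi-index sums of \eqref{EY}, tracking the $(-1)^k/k!$ factor, and verifying that the constraint on the indices matches the one written in the lemma. I would double-check the boundary term $k=0$ (which contributes the single summand $\mathbb{E}[Y^{(l)}]$ for each $l$) to ensure consistency with the direct evaluation $\mathbb{E}[Y]=\int_0^\infty(1-F_Y(y))dy$.
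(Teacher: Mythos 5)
Your proposal is correct and follows essentially the same route as the paper: the paper simply cites the PDF of the maximum of i.n.i.d. exponentials and the hypoexponential PDF of the sum (which you derive directly via differentiation plus inclusion--exclusion and via MGF inversion, respectively) and then computes the first moments by the same term-by-term integration $\int_0^\infty y e^{-\alpha y}\,dy = 1/\alpha^2$. Your flagged concern about the combinatorial bookkeeping is warranted only because the paper's notation contains apparent typos (the constraint written as $n_1\neq\cdots\neq n_k\neq k$ should exclude $l$, and the product in \eqref{EZ} should run over all $j\neq k$ up to $L_T$); your derivation produces the intended, correct constraints.
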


\begin{proof}
Please refer to Appendix \ref{appEYderiv}.
\end{proof}

On the other hand, in the case when co-located primary Tx/Rx antennas are considered, e.g., a typical MIMO transceiver, the total interfering power between the primary and secondary system can be efficiently modeled by independent and identically distributed (i.i.d.) RVs, and we have the following lemma.
\begin{lemma}
\label{Lemma-3}
For the scenario of i.i.d. interfering channels, it stems that
\begin{equation}
\mathbb{E}[Y]
= L_{R}\mathbb{E}[Y^{(\text{i.i.d.})}]\sum^{L_{R}-1}_{k=0}\frac{(-1)^{k}}{(k+1)^{2}}\binom{L_{R}-1}{k},
\label{EYiid}
\end{equation}
where $\mathbb{E}[Y^{(\text{i.i.d.})}]$ denotes the average (identical) interfering channel gain from ST to PR. Further, we get
\begin{align}
\mathbb{E}[Z]=L_{T}\mathbb{E}[Z_{\text{i.i.d.}}],
\label{EZiid}
\end{align} 
where $\mathbb{E}[Z_{\text{i.i.d.}}]$ stands for the average (identical) channel gain from each PT to SR. 
\end{lemma}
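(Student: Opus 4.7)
The plan is to establish both identities directly from the i.i.d.\ assumption, rather than specialising the cumbersome multi-index sums of Lemma~\ref{Lemma-2}. Under independent Rayleigh fading, the channel gains $|y_j^{(i)}|^2$ and $|z_j|^2$ are i.i.d.\ exponential, which makes the two expectations amenable to elementary order-statistic and linearity-of-expectation arguments.

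For $\mathbb{E}[Y]$, I would first observe that $Y^{(i)}=\max_{j\in[1,L_R]}|y_j^{(i)}|^2$ is the maximum of $L_R$ i.i.d.\ exponential RVs with common mean $\mu\triangleq\mathbb{E}[Y^{(\mathrm{i.i.d.})}]$, whose CDF is $F(y)=(1-e^{-y/\mu})^{L_R}$. Then I would compute $\mathbb{E}[Y]=\int_0^{\infty}[1-F(y)]\,dy$ by expanding $F$ via the binomial theorem, integrating the resulting exponentials termwise using $\int_0^{\infty}e^{-ky/\mu}\,dy=\mu/k$, and reshuffling the index $k\mapsto k+1$ together with the identity $\binom{L_R}{k}=(L_R/k)\binom{L_R-1}{k-1}$ to bring the expression into the form stated in the Lemma.

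For $\mathbb{E}[Z]$ no order statistic is involved: $Z$ is simply a sum of $L_T$ i.i.d.\ exponential RVs each with mean $\mathbb{E}[Z_{\mathrm{i.i.d.}}]$, so linearity of expectation immediately yields $\mathbb{E}[Z]=L_T\,\mathbb{E}[Z_{\mathrm{i.i.d.}}]$.

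The only mildly non-routine step is the algebraic re-indexing that converts the expression $\sum_{k=1}^{L_R}(-1)^{k+1}\binom{L_R}{k}/k$ obtained from direct integration into the $\sum_{k=0}^{L_R-1}(-1)^{k}\binom{L_R-1}{k}/(k+1)^2$ form used in the Lemma statement; this is the one place where I would expect the bookkeeping to demand a moment's care. As a quick sanity check, both forms should reduce to the classical mean-of-maxima expression $\mu\,H_{L_R}$ (e.g., yielding $3\mu/2$ at $L_R=2$), which I would use to validate the manipulation before writing it out.
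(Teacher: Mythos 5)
Your proposal is correct and follows essentially the same route as the paper: both treat $Y$ as the maximum of $L_{R}$ i.i.d.\ exponentials, expand binomially, integrate termwise, and both obtain \eqref{EZiid} from the fact that $Z$ is a sum of $L_{T}$ i.i.d.\ exponential gains. The only cosmetic difference is that the paper integrates $y$ against the max-PDF \eqref{fYiid}, which yields the $(k+1)^{-2}$ terms of \eqref{EYiid} directly, whereas you integrate the CCDF and recover the stated form via the re-indexing $\binom{L_R}{k}/k = L_{R}\binom{L_R-1}{k-1}/k^{2}$, which indeed goes through.
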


\begin{proof}
In the case of co-located primary Tx antennas, $f_{Y}(\cdot)$ is formed by the maximum of $L_{R}$ i.i.d. exponential RVs, which is expressed as
\begin{equation}
f_{Y}(y) 
= \sum^{L_{R}-1}_{k=0}\frac{(-1)^{k}\binom{L_{R}-1}{k}L_{R}}{\mathbb{E}[Y^{(\text{i.i.d.})}]}\exp\left(-\frac{(k+1)y}{\mathbb{E}[Y^{(\text{i.i.d.})}]}\right).
\label{fYiid}
\end{equation} 
Thereby, inserting (\ref{fYiid}) into (\ref{EEEY}) and performing some algebraic manipulations yields \eqref{EYiid}.

Since the distribution of the total interfering power from the primary to secondary system can be obtained by an $L_{T}$-fold convolution of i.i.d. exponential RVs, it yields a $\mathcal{\chi}^{2}_{2 L_{T}}$ distribution. Therefore, \eqref{EZiid} can be readily obtained.
\end{proof}

Now, we are in a position of performing optimal power allocation at ST. Clearly, the constraint given by \eqref{optproblem1} consists of a linear sum and, thus, typical convex optimization solver can be applied to $\mathcal{P}'_{1}$, yielding the optimal Tx power for the $i^{\rm th}$ data stream as
\begin{align}
p^{\star}_{i}=\left(\frac{\lambda}{\text{ln}(2)\mathbb{E}[Y]}-\frac{p_{\rm p}\mathbb{E}[Z]+N_{0}}{X_{i}}\right)^{+},
\label{optpow}
\end{align}
where $\lambda$ denotes the associated Lagrangian multiplier. Notice that $\lambda$ is a common parameter used for all the simultaneously transmitted data streams due to the average IT constraint and the statistically identical channel fading conditions specified before \eqref{Opt_2}. In particular, the value of $\lambda$ can be efficiently calculated by turning the inequality of \eqref{optproblem1} into equality, such that
\begin{eqnarray}
\lefteqn{\sum^{M}_{i=1}p_{i}\mathbb{E}\left[Y\right] = Q} \nonumber \\
& \Longrightarrow & \mathbb{E}_{X_{i},|y_{i}|^{2}}\left[\sum^{M}_{i=1}p_{i}|y_{i}|^{2}\right] = Q  \nonumber \\
& \Longrightarrow & \sum^{M}_{i=1}\mathbb{E}_{X_{i}}\left[p_{i}\right] = \frac{Q}{\mathbb{E}[Y]} \nonumber \\
& \Longrightarrow & \mathbb{E}_{X_{i}}[p_{i}]=\frac{Q}{M\mathbb{E}[Y]}, \, \forall i\in[1, M],
\label{langrange}
\end{eqnarray}
where the fact that $X_{i}$ and $|y_{i}|^{2}$, $\forall i\in[1, M]$, are mutually independent is used. More specifically, we have the following lemma.

\begin{lemma}
The value of Lagrangian multiplier, $\lambda$, can be obtained by numerically solving the following expression with respect to $\lambda$,
\begin{align}
\left\{
\begin{array}{l l} 
\frac{\mathbb{E}[X]}{\Gamma(N-M+1)}\Bigg[\frac{\lambda \mathbb{E}[X]\Gamma\left(N-M+1,\frac{\mathcal{C}}{\mathbb{E}[X]}\right)}{\ln{(2)}\mathbb{E}[Y]}\\
-\left(p_{\rm p}\mathbb{E}[Z]+N_{0}\right)\Gamma\left(N-M,\frac{\mathcal{C}}{\mathbb{E}[X]}\right)\Bigg]&\\
= \min\left\{\frac{Q}{M\mathbb{E}[Y]},\frac{p_{\max}}{M}\right\}, &{\rm for }\ N>M,\\
& \\
\frac{\exp\left(-\frac{\mathcal{C}}{\mathbb{E}[X]}\right)\lambda}{\ln{(2)} \, \mathbb{E}[Y]}-\frac{\left(p_{\rm p}\mathbb{E}[Z]+N_{0}\right)\Gamma\left(0,\frac{\mathcal{C}}{\mathbb{E}[X]}\right)}{\mathbb{E}[X]}\\
= \min\left\{\frac{Q}{M\mathbb{E}[Y]},\frac{p_{\max}}{M}\right\}, &{\rm for }\ N=M, 
\end{array}\right. 
\label{langrangecl}
\end{align}
where $p_{\max}$ is the maximum allowable power at ST (implying that $p_{\max}/M$ is the maximum allowable power at each Tx antenna)\footnote{{\color{black}In current work, it is assumed that the hardware gear of each secondary node is identical; the RF
circuit, power amplifier, etc., all have the same hardware characteristics such
that the maximum achievable power for each Tx antenna reaches up to $p_{\max}/M$.}} and
\begin{equation}
\mathcal{C} 
\triangleq \frac{\ln{(2)}}{\lambda}\left(\mathbb{E}[Y]N_{0}+p_{\rm p}\mathbb{E}[Y] \, \mathbb{E}[Z]\right).
\label{c}
\end{equation}
\end{lemma}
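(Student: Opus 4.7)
The plan is to evaluate the per-antenna average transmit power $\mathbb{E}_{X_i}[p_i^\star]$ in closed form from the water-filling-like expression (\ref{optpow}) and then set it equal to the per-antenna budget, which, combining (\ref{langrange}) with the peak-power cap, is $\min\{Q/(M\mathbb{E}[Y]),\,p_{\max}/M\}$. The resulting equality is implicit in $\lambda$ because $\mathcal{C}$ itself depends on $\lambda$ via (\ref{c}). First, I would invoke Lemma \ref{Lemma-1}: since $X_i=\sum_{l=1}^{N-M+1}|h_l^{(i)}|^2$ is a sum of i.i.d.\ exponentials, it is Gamma distributed with shape $N-M+1$ and mean-parameter $\mathbb{E}[X]$, so
\[
f_{X_i}(x)=\frac{x^{N-M}\,e^{-x/\mathbb{E}[X]}}{\Gamma(N-M+1)\,\mathbb{E}[X]^{N-M+1}},\quad x\geq 0.
\]
Observing that the $(\cdot)^+$ clamp in (\ref{optpow}) is inactive on $X_i<\mathcal{C}$ and active on $X_i\geq\mathcal{C}$, the expected per-antenna power splits as
\[
\mathbb{E}[p_i^\star]=\frac{\lambda}{\ln(2)\mathbb{E}[Y]}\int_{\mathcal{C}}^{\infty}f_{X_i}(x)\,dx-(p_{\rm p}\mathbb{E}[Z]+N_0)\int_{\mathcal{C}}^{\infty}\frac{f_{X_i}(x)}{x}\,dx.
\]

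Next, I would evaluate both integrals via the substitution $u=x/\mathbb{E}[X]$ together with the definition $\Gamma(s,t)=\int_t^{\infty}u^{s-1}e^{-u}\,du$. The first integral reduces to the Gamma CCDF $\Gamma(N-M+1,\mathcal{C}/\mathbb{E}[X])/\Gamma(N-M+1)$, while the $1/x$ weighting in the second drops the polynomial degree by one and produces $\Gamma(N-M,\mathcal{C}/\mathbb{E}[X])/[\Gamma(N-M+1)\mathbb{E}[X]]$. Equating the resulting expression to $\min\{Q/(M\mathbb{E}[Y]),\,p_{\max}/M\}$ and rearranging yields the $N>M$ branch of (\ref{langrangecl}). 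For the degenerate case $N=M$, $X_i$ is a single exponential with mean $\mathbb{E}[X]$: the first integral collapses to $e^{-\mathcal{C}/\mathbb{E}[X]}$, and the second becomes the exponential integral $\Gamma(0,\mathcal{C}/\mathbb{E}[X])/\mathbb{E}[X]$, which recovers the second branch.

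The main obstacle will be the $1/x$-weighted integral: the extra $x^{-1}$ factor prevents a direct CCDF evaluation and forces the appearance of the lower-order incomplete Gamma $\Gamma(N-M,\cdot)$, specializing in the $N=M$ limit to the non-elementary exponential integral $\Gamma(0,\cdot)$. Moreover, since $\mathcal{C}$ is inversely proportional to $\lambda$ through (\ref{c}), the resulting equality mixes $\lambda$ multiplicatively with $\Gamma(\cdot,c/\lambda)$-type terms for a positive constant $c$, producing a transcendental condition; this is precisely why (\ref{langrangecl}) must be solved by a one-dimensional numerical root-finder rather than admitting a closed-form inversion in $\lambda$.
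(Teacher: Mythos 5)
Your proposal is correct and follows essentially the same route as the paper's Appendix C: enforce $\mathbb{E}_{X_i}[p_i^\star]=\min\{Q/(M\mathbb{E}[Y]),\,p_{\max}/M\}$, restrict the integration to $X_i\geq\mathcal{C}$ where the water-filling clamp is inactive, and evaluate the two resulting integrals of the Erlang density via upper incomplete Gamma functions (the paper invokes Gradshteyn--Ryzhik Eq.\ (3.381.3)), with $N=M$ handled through the exponential special case. One caveat: your correctly evaluated $N>M$ expression carries the prefactor $\frac{1}{\mathbb{E}[X]\,\Gamma(N-M+1)}$ rather than the printed $\frac{\mathbb{E}[X]}{\Gamma(N-M+1)}$ in \eqref{langrangecl}; since your form also reduces exactly to the stated $N=M$ branch when $N=M$, the discrepancy (a factor $\mathbb{E}[X]^{2}$) appears to be a typographical slip in the displayed equation rather than a gap in your argument, so the claim that rearranging ``yields the $N>M$ branch'' holds only up to that correction.
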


\begin{proof}
Please refer to Appendix \ref{applagrange}.
\end{proof}

\begin{remark}
It is remarkable that since all the involved average channel gains, i.e., $\mathbb{E}[X]$, $\mathbb{E}[Y]$ and $\mathbb{E}[Z]$, are analytically provided, the value of $\lambda$ can be efficiently computed in polynomial time, by using standard numerical-solving methods. Furthermore, it is observed from (\ref{optpow}) that $X_{i}>\ln{(2)}\mathbb{E}[Y]\left(p_{\rm p}\mathbb{E}[Z]+N_{0}\right)/\lambda$ should hold in order to activate the transmission of the $i^{\rm th}$ secondary data stream, which implies that the minimum channel gain needed to activate this data stream is proportional to $\mathbb{E}[Y]$ and $\mathbb{E}[Z]$, yet inversely proportional to the Lagrangian multiplier $\lambda$.
\end{remark}

\section{Performance Analysis of the Secondary Transmission}
\label{Performance Analysis of the Secondary System}
In this section, outage performance of the secondary system is derived in an exact closed-form expression, while some special cases of practical interest are discussed. 

\subsection{Outage Probability of the Secondary Transmission}
By definition, outage probability of the $i^{\rm th}$ secondary data stream, $P^{(i)}_{\text{out}}(\gamma_{\text{th}})$, $\forall i \in [1, M]$, is the probability that its received SINR falls below a prescribed threshold value $\gamma_{\text{th}}\triangleq 2^{\mathcal{R}_{T}}-1$, where $\mathcal{R}_{T}$ stands for a target data rate in the unit of bps/Hz.

In light of \eqref{optpow}, the instantaneous received power of the $i^{\rm th}$ secondary data stream at SR is given by 
\begin{equation}
p^{\star}_{i}X_{i}
= \left(\frac{\lambda X_{i}}{\text{ln}(2)\mathbb{E}[Y]}-\left(p_{\rm p}\mathbb{E}[Z]+N_{0}\right)\right)^{+},
\label{snri}
\end{equation}
whose CDF is formalized in the following lemma.

\begin{lemma}
The CDF of the instantaneous received power of the $i^{\rm th}$ secondary stream, $p^{\star}_{i}X_{i}$, is presented as
\begin{align}
\nonumber
F_{p^{\star}_{i}X_{i}}(x)&=1-\exp\left(-\frac{\ln{(2)}\mathbb{E}[Y]x}{\lambda \mathbb{E}[X]}-\frac{\mathcal{C}}{\mathbb{E}[X]}\right)\\
&\times \sum^{N-M}_{l=0}\frac{\left(\frac{\ln{(2)}\mathbb{E}[Y]x}{\lambda \mathbb{E}[X]}+\frac{\mathcal{C}}{\mathbb{E}[X]}\right)^{l}}{l!}. 
\label{uncpdfsnr}
\end{align}
\end{lemma}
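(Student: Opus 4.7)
The plan is to start from the explicit expression \eqref{snri} for the received power of the $i^{\rm th}$ stream and convert the probability statement into a CDF of $X_i$, then invoke the known Erlang distribution of $X_i$ established in Lemma~\ref{Lemma-1}.

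First, I would use the definition \eqref{c} of $\mathcal{C}$ to rewrite \eqref{snri} in the compact form
\begin{equation*}
p^{\star}_{i}X_{i} \;=\; \frac{\lambda}{\ln(2)\,\mathbb{E}[Y]}\bigl(X_{i}-\mathcal{C}\bigr)^{+},
\end{equation*}
since $p_{\rm p}\mathbb{E}[Z]+N_{0}=\lambda\mathcal{C}/(\ln(2)\mathbb{E}[Y])$. For any $x\geq 0$ the event $\{p^{\star}_{i}X_{i}\leq x\}$ therefore coincides with $\{X_{i}\leq \mathcal{C}+\ln(2)\mathbb{E}[Y]x/\lambda\}$ (the mass at zero introduced by the positive-part operator is automatically absorbed because $\mathcal{C}>0$). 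Hence
\begin{equation*}
F_{p^{\star}_{i}X_{i}}(x) \;=\; F_{X_{i}}\!\left(\mathcal{C}+\frac{\ln(2)\,\mathbb{E}[Y]\,x}{\lambda}\right).
\end{equation*}

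Next I would invoke Lemma~\ref{Lemma-1}: under independent Rayleigh fading, $X_{i}=\sum_{l=1}^{N-M+1}|h^{(i)}_{l}|^{2}$ is a sum of $N-M+1$ i.i.d.\ exponential RVs with common mean $\mathbb{E}[X]$, i.e., an Erlang (equivalently scaled $\mathcal{X}^{2}_{2(N-M+1)}$) variate whose CCDF is
\begin{equation*}
\overline{F}_{X_{i}}(y) \;=\; \exp\!\left(-\frac{y}{\mathbb{E}[X]}\right)\sum_{l=0}^{N-M}\frac{(y/\mathbb{E}[X])^{l}}{l!},\qquad y\geq 0.
\end{equation*}
Substituting $y=\mathcal{C}+\ln(2)\mathbb{E}[Y]x/\lambda$ and simplifying the argument of the exponential and of the polynomial gives exactly \eqref{uncpdfsnr}.

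There is no real obstacle: the argument is purely a deterministic change of variable followed by insertion of the Erlang CCDF. The only care points are (i) verifying that the positive-part operator does not contribute an extra point mass to the CDF at $x=0$ (which follows from $\mathcal{C}>0$ combined with $F_{X_i}(\mathcal{C})>0$, so continuity holds for $x>0$ and the stated formula already yields $F_{p^{\star}_{i}X_{i}}(0)=F_{X_i}(\mathcal{C})>0$, consistent with the atom at zero), and (ii) keeping track of the distinction between the per-path mean $\mathbb{E}[X]=\mathbb{E}[|h^{(i)}_{l}|^{2}]$ used in the Erlang formula and the sum $X_i$ itself. With those bookkeeping points handled, the derivation is a two-line substitution.
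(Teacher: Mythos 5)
Your proposal is correct and follows essentially the same route as the paper's own derivation: rewriting the event $\{p^{\star}_{i}X_{i}\leq x\}$ as $\{X_{i}\leq \mathcal{C}+\ln(2)\mathbb{E}[Y]x/\lambda\}$ and then inserting the Erlang (shape $N-M+1$, scale $\mathbb{E}[X]$) CDF of $X_{i}$, which is exactly what the paper does in its appendix. Your extra remarks on the atom at zero from the $(\cdot)^{+}$ operator and on $\mathbb{E}[X]$ being the per-branch mean are consistent with the paper and do not change the argument.
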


\begin{proof}
The proof is relegated in Appendix \ref{apppdfsnr}.
\end{proof}

Now we are in a position to formulate the outage probability of the $i^{\rm th}$ secondary data stream.
\begin{theorem}
Outage probability of the $i^{\rm th}$ secondary data stream can be expressed in closed-form as
\begin{align}
\nonumber 
&P^{(i)}_{\rm out}(\gamma_{\rm th})
 =1-\sum^{N-M}_{l=0}\sum^{L_{T}}_{k=1}\left(\prod^{k}_{j=1, j\neq k}\frac{\mathbb{E}[Z_{k}]}{\mathbb{E}[Z_{k}]-\mathbb{E}[Z_{j}]}\right)\\
\nonumber 
&\times \frac{\left(\frac{\ln{(2)}\mathbb{E}[Y]p_{\rm p}\gamma_{\rm th}}{\lambda \mathbb{E}[X]}\right)^{l}\exp\left(\frac{\left(\frac{\ln{(2)}\mathbb{E}[Y]N_{0}\gamma_{\rm th}}{\lambda \mathbb{E}[X]}+\frac{\mathcal{C}}{\mathbb{E}[X]}\right)}{\left(\frac{\ln{(2)}\mathbb{E}[Y]p_{\rm p}\mathbb{E}[Z_{k}]\gamma_{\rm th}}{\lambda \mathbb{E}[X]}\right)}\right)}{l!\mathbb{E}[Z_{k}]\left(\frac{\ln{(2)}\mathbb{E}[Y]p_{\rm p}\gamma_{\rm th}}{\lambda \mathbb{E}[X]}+\frac{1}{\mathbb{E}[Z_{k}]}\right)^{l+1}}\\
&\times \Gamma\left(l+1,\frac{\left(\frac{\ln{(2)}\mathbb{E}[Y]p_{\rm p}\gamma_{\rm th}}{\lambda \mathbb{E}[X]}+\frac{1}{\mathbb{E}[Z_{k}]}\right)\left(\frac{\ln{(2)}\mathbb{E}[Y]N_{0}\gamma_{\rm th}}{\lambda \mathbb{E}[X]}+\frac{\mathcal{C}}{\mathbb{E}[X]}\right)}{\left(\frac{\ln{(2)}\mathbb{E}[Y]p_{\rm p}\gamma_{\rm th}}{\lambda \mathbb{E}[X]}\right)}\right).
\label{poutcl}
\end{align}
\end{theorem}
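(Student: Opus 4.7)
The plan is to compute $P^{(i)}_{\rm out}(\gamma_{\rm th}) = \Pr[p^\star_i X_i < \gamma_{\rm th}(p_{\rm p}Z+N_0)]$ by conditioning on the interference-plus-noise in the denominator. Since $p^\star_i$ depends only on $X_i$ (through the fixed average channel statistics $\mathbb{E}[X]$, $\mathbb{E}[Y]$, $\mathbb{E}[Z]$ that enter $\lambda$) and is therefore independent of the instantaneous $Z$, total probability gives $P^{(i)}_{\rm out}(\gamma_{\rm th})=\mathbb{E}_Z\!\left[F_{p^\star_i X_i}(\gamma_{\rm th}(p_{\rm p}Z+N_0))\right]$, reducing the task to a single one-dimensional integral.

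Next, I would plug in the CDF $F_{p^\star_i X_i}(\cdot)$ from \eqref{uncpdfsnr} together with the PDF of $Z$. Since $Z$ is the sum of $L_T$ independent non-identically distributed exponential RVs with means $\mathbb{E}[Z_k]$, a partial-fraction decomposition of its Laplace transform yields $f_Z(z)=\sum_{k=1}^{L_T}\frac{\alpha_k}{\mathbb{E}[Z_k]}e^{-z/\mathbb{E}[Z_k]}$, with mixing weights $\alpha_k=\prod_{j=1,j\neq k}^{L_T}\frac{\mathbb{E}[Z_k]}{\mathbb{E}[Z_k]-\mathbb{E}[Z_j]}$ which are exactly those appearing in \eqref{EZ}. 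Setting $a\triangleq \frac{\ln(2)\mathbb{E}[Y]p_{\rm p}\gamma_{\rm th}}{\lambda\mathbb{E}[X]}$ and $b\triangleq \frac{\ln(2)\mathbb{E}[Y]N_0\gamma_{\rm th}}{\lambda\mathbb{E}[X]}+\frac{\mathcal{C}}{\mathbb{E}[X]}$, the integrand becomes a finite sum of terms of the form $(az+b)^l e^{-(a+1/\mathbb{E}[Z_k])z}$ weighted by $\alpha_k/(l!\mathbb{E}[Z_k])$.

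The core step is then to evaluate $I_{k,l}=\int_0^\infty (az+b)^l e^{-cz}\,dz$ with $c=a+1/\mathbb{E}[Z_k]$. The substitution $u=az+b$ transforms this into $\frac{e^{bc/a}}{a}\int_b^\infty u^l e^{-cu/a}du$ which, after the rescaling $t=cu/a$, collapses to $\frac{a^l}{c^{l+1}}e^{bc/a}\,\Gamma(l+1,bc/a)$ by the definition of the upper incomplete Gamma function. Combining the exponential prefactor $e^{-b}$ inherited from \eqref{uncpdfsnr} with $e^{bc/a}$ produces $e^{b/(a\mathbb{E}[Z_k])}$, matching the exponential inside \eqref{poutcl}, while $c^{l+1}$ reproduces the bracketed denominator raised to the $(l+1)$th power. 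Summing over $l$ and $k$ then yields the claimed closed-form.

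The main obstacle is bookkeeping rather than mathematical depth: tracing the constants $a$, $b$, $c$ through the algebra so that the final exponent, the argument of the incomplete Gamma function, and the denominator polynomial all align precisely with the compact but heavily nested expression in \eqref{poutcl}. In particular, exploiting the identity $c/a-1=1/(a\mathbb{E}[Z_k])$ is what lets the two exponential factors collapse into the single term shown, and recognising that the $\Gamma(l+1,\cdot)$ argument in the statement (written as a product of two brackets divided by $a$) is nothing but $bc/a$ is the one place where a careless manipulation could obscure the final identification with \eqref{poutcl}.
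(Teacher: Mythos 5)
Your proposal is correct and follows essentially the same route as the paper: condition on $Z$ so that $P^{(i)}_{\rm out}(\gamma_{\rm th})=\int_0^\infty F_{p^\star_i X_i}\bigl(\gamma_{\rm th}(p_{\rm p}z+N_0)\bigr)f_Z(z)\,{\rm d}z$, insert \eqref{uncpdfsnr} and \eqref{fZ}, and evaluate the resulting integral of $(az+b)^l e^{-cz}$ in terms of the upper incomplete Gamma function. The only difference is that you derive that integral by direct substitution whereas the paper invokes the tabulated identity \cite[Eq.\ (3.382.4)]{tables}, which is the same computation, and your constant bookkeeping (the $e^{b/(a\mathbb{E}[Z_k])}$ factor and the $\Gamma(l+1,bc/a)$ argument) matches \eqref{poutcl}.
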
 

\begin{proof}
The proof is provided in Appendix \ref{apppoutcl}.
\end{proof}

Since $(l+1) \in \mathbb{N}^{+}$, the upper incomplete Gamma function $\Gamma(\cdot, \cdot)$ in \eqref{poutcl} can alternatively be expressed in terms of finite sum series of elementary functions \cite[Eq. 8.352.2]{tables}. That is, \eqref{poutcl} can be rewritten as finite sum series of elementary functions. Hence, the outage probability given by \eqref{poutcl} can be accurately and efficiently calculated by using popular numerical softwares, such as Matlab and Mathematica.

Besides the outage probability, other important system performance metrics can be readily obtained by means of a simple numerical integration of \eqref{poutcl}. For instance, the average ergodic capacity of the $i^{\rm th}$ data stream is given by $(1/\text{ln}(2))\int^{\infty}_{0}\left(1-P^{(i)}_{\rm out}(x)\right)/(1+x){\rm d}x$, while the average symbol-error rate for binary modulations is captured by $(\mathcal{A}\sqrt{\mathcal{B}}/(2 \sqrt{\pi}))\int^{\infty}_{0}\left(\exp(-\mathcal{B} x)/\sqrt{x}\right)P^{(i)}_{\rm out}(x){\rm d}x$ with $\mathcal{A}$ and $\mathcal{B}$ standing for fixed parameters determined by the used modulation scheme. 

\subsection{Special Cases of Practical Interest}
Now, we study two special cases of practical interest, and their respective outage probabilities are explicitly presented.

\underline{\emph{Case i}}: Equal Number of Antennas at Secondary Transmitter and Receiver.
\begin{corollary}
In the case of $M=N$, i.e., ST and SR have a same number of antennas, the outage probability given by \eqref{poutcl} reduces to 
\begin{align}
\nonumber
P^{(i)}_{\rm out}(\gamma_{\rm th})
 &=1-\sum^{L_{T}}_{k=1}\left(\prod^{k}_{j=1, j\neq k}\frac{\mathbb{E}[Z_{k}]}{\mathbb{E}[Z_{k}]-\mathbb{E}[Z_{j}]}\right)\\
&\times \frac{\exp\left(-\frac{\ln{(2)}\mathbb{E}[Y]N_{0}\gamma_{\rm th}}{\lambda \mathbb{E}[X]}-\frac{\mathcal{C}}{\mathbb{E}[X]}\right)}{\left(\frac{\ln{(2)}\mathbb{E}[Y]\mathbb{E}[Z_{k}]p_{\rm p}\gamma_{\rm th}}{\lambda \mathbb{E}[X]}+1\right)}. 
\label{poutcl11}
\end{align}
\end{corollary}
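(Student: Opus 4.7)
The plan is to specialize \eqref{poutcl} directly to the case $M=N$, exploiting two collapses that arise in that regime. Firstly, the outer summation index $l$ runs only over $\{0\}$, since the upper limit becomes $N-M=0$. Secondly, the upper incomplete Gamma function reduces via $\Gamma(1,x)=\exp(-x)$ to an ordinary exponential. Everything else is routine bookkeeping of cancellations.

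To keep the algebra transparent, I would introduce the shorthand $a \triangleq \ln(2)\mathbb{E}[Y]p_{\rm p}\gamma_{\rm th}/(\lambda\mathbb{E}[X])$ and $c \triangleq \ln(2)\mathbb{E}[Y]N_{0}\gamma_{\rm th}/(\lambda\mathbb{E}[X])+\mathcal{C}/\mathbb{E}[X]$, so that the explicit exponential prefactor in \eqref{poutcl} reads $\exp(c/(a\mathbb{E}[Z_k]))$ and the factor raised to the $(l+1)$-th power reads $(a+1/\mathbb{E}[Z_k])$. For $l=0$, the quantities $a^{l}$ and $l!$ both equal unity, and $\mathbb{E}[Z_k](a+1/\mathbb{E}[Z_k])^{l+1}$ simplifies to $\ln(2)\mathbb{E}[Y]\mathbb{E}[Z_k]p_{\rm p}\gamma_{\rm th}/(\lambda\mathbb{E}[X])+1$, matching the denominator in \eqref{poutcl11} verbatim.

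The central (and only non-trivial) step will be the cancellation of exponentials. Expanding the argument of $\Gamma(1,\cdot)$ in \eqref{poutcl} yields $(a+1/\mathbb{E}[Z_k])c/a = c + c/(a\mathbb{E}[Z_k])$, hence $\Gamma(1,\cdot)=\exp(-c)\exp(-c/(a\mathbb{E}[Z_k]))$. The second factor exactly cancels the explicit prefactor $\exp(c/(a\mathbb{E}[Z_k]))$, leaving only $\exp(-c)=\exp\bigl(-\ln(2)\mathbb{E}[Y]N_{0}\gamma_{\rm th}/(\lambda\mathbb{E}[X])-\mathcal{C}/\mathbb{E}[X]\bigr)$, which coincides with the exponential in \eqref{poutcl11}. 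Leaving the summation over $k$ and the Lagrange-type product $\prod_{j\neq k}\mathbb{E}[Z_k]/(\mathbb{E}[Z_k]-\mathbb{E}[Z_j])$ untouched, the substitution produces \eqref{poutcl11} immediately. I do not anticipate any genuine obstacle beyond making this exponential cancellation explicit.
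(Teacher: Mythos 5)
Your proposal is correct and follows the same route as the paper: with $M=N$ the sum over $l$ collapses to the $l=0$ term, and the identity $\Gamma(1,x)=\exp(-x)$ plus the exponential cancellation you spell out yields \eqref{poutcl11}. The paper's proof is exactly this, stated more tersely ("using $\Gamma(1,x)=\exp(-x)$ and performing some algebraic manipulations"); your explicit cancellation of $\exp\bigl(c/(a\mathbb{E}[Z_k])\bigr)$ is just the detail the paper leaves implicit.
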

\begin{proof}
Using the equality $\Gamma(1, x) = \exp(-x)$ and performing some algebraic manipulations over \eqref{poutcl} yields \eqref{poutcl11} in a straightforward manner.
\end{proof}

\underline{\emph{Case ii}}: Identical Statistics of Primary-to-Secondary Interferences.
\begin{corollary}
For the case with $L_{T}$ co-located primary Tx antennas (e.g., a multi-antenna primary transmitter), the outage probability given by \eqref{poutcl} reduces to 
\begin{align}
\nonumber
P^{(i)}_{\rm out}(\gamma_{\rm th})&=1-\exp\left(-\frac{\ln{(2)}\mathbb{E}[Y]N_{0}\gamma_{\rm th}}{\lambda \mathbb{E}[X]}-\frac{\mathcal{C}}{\mathbb{E}[X]}\right)\\
\nonumber
&\times \sum^{N-M}_{l=0}\sum^{l}_{s=0}\binom{l}{s}\frac{s!}{l!(L_{T}-1)!\mathbb{E}[Z_{\rm i.i.d.}]^{L_{T}}}\\
&\times \frac{\left(\frac{\ln{(2)}\mathbb{E}[Y]N_{0}\gamma_{\rm th}}{\lambda \mathbb{E}[X]}+\frac{\mathcal{C}}{\mathbb{E}[X]}\right)^{l-s}\left(\frac{\ln{(2)}\mathbb{E}[Y]p_{\rm p}\gamma_{\rm th}}{\lambda \mathbb{E}[X]}\right)^{s}}{\left(\frac{\ln{(2)}\mathbb{E}[Y]p_{\rm p}\gamma_{\rm th}}{\lambda \mathbb{E}[X]}+\frac{1}{\mathbb{E}[Z_{\rm i.i.d.}]}\right)^{s+1}}.
\label{poutcliid}
\end{align}
\end{corollary}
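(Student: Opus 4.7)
The plan is to reuse the derivation of Theorem~1 in Appendix~\ref{apppoutcl}, but with the distribution of $Z$ replaced by the one appropriate to i.i.d. interferers. In the co-located scenario, the gains $|z_j|^2$ are i.i.d.\ exponential with common mean $\mathbb{E}[Z_{\rm i.i.d.}]$, so $Z=\sum_{j=1}^{L_T}|z_j|^2$ is Erlang with density
\[
f_Z(z)=\frac{z^{L_T-1}\exp\!\big(-z/\mathbb{E}[Z_{\rm i.i.d.}]\big)}{(L_T-1)!\,\mathbb{E}[Z_{\rm i.i.d.}]^{L_T}},\qquad z\geq 0,
\]
rather than the i.n.i.d.\ partial-fraction mixture of exponentials that underlies (\ref{poutcl}). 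Note that a naive substitution $\mathbb{E}[Z_k]\to\mathbb{E}[Z_{\rm i.i.d.}]$ in (\ref{poutcl}) is inadmissible: the product $\prod_{j\neq k}\mathbb{E}[Z_k]/(\mathbb{E}[Z_k]-\mathbb{E}[Z_j])$ becomes $0/0$ and would require $(L_T{-}1)$ iterated L'H\^opital-type differentiations to resolve. Restarting the averaging step with the Erlang density above is considerably cleaner.

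Concretely, I would start from
\[
P^{(i)}_{\rm out}(\gamma_{\rm th})=\int_{0}^{\infty}F_{p^\star_i X_i}\!\big(\gamma_{\rm th}(p_{\rm p}z+N_0)\big)\,f_Z(z)\,dz,
\]
insert $F_{p^\star_i X_i}(\cdot)$ from Lemma~5 (eq.~(\ref{uncpdfsnr})), and pull out the $z$-independent factor $\exp\!\big(-\ln(2)\mathbb{E}[Y]N_0\gamma_{\rm th}/(\lambda\mathbb{E}[X])-\mathcal{C}/\mathbb{E}[X]\big)$. The argument of the polynomial $(\cdot)^l$ in (\ref{uncpdfsnr}) is affine in $z$, so the binomial theorem separates it into the $z$-free part $\big(\ln(2)\mathbb{E}[Y]N_0\gamma_{\rm th}/(\lambda\mathbb{E}[X])+\mathcal{C}/\mathbb{E}[X]\big)^{l-s}$ and the monomial $\big(\ln(2)\mathbb{E}[Y]p_{\rm p}\gamma_{\rm th}/(\lambda\mathbb{E}[X])\big)^{s}z^{s}$. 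Merging the exponential in (\ref{uncpdfsnr}) with the $\exp(-z/\mathbb{E}[Z_{\rm i.i.d.}])$ of $f_Z$ reduces every $(l,s)$-summand to a Gamma-type integral
\[
\int_0^{\infty}z^{\,s+L_T-1}e^{-\eta z}\,dz=\frac{\Gamma(s+L_T)}{\eta^{\,s+L_T}},
\]
with $\eta\triangleq\ln(2)\mathbb{E}[Y]p_{\rm p}\gamma_{\rm th}/(\lambda\mathbb{E}[X])+1/\mathbb{E}[Z_{\rm i.i.d.}]$.

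Assembling the pieces — the outer exponential prefactor, the $1/l!$ from (\ref{uncpdfsnr}), the Erlang normalization $1/[(L_T-1)!\mathbb{E}[Z_{\rm i.i.d.}]^{L_T}]$, the binomial coefficient $\binom{l}{s}$, and the output of the Gamma integral — and tidying the resulting double sum over $l$ and $s$ produces the claimed form (\ref{poutcliid}). The entire computation is conceptually parallel to Appendix~\ref{apppoutcl}, the only change being the density of $Z$; no new analytical device is needed. The main obstacle is therefore purely notational: keeping the factorials and exponents synchronised across the binomial expansion and the Gamma integration without off-by-one slips. A convenient consistency check is the case $L_T=1$: $Z$ is then exponential, the partial-fraction product in (\ref{poutcl}) trivialises to $1$, and both (\ref{poutcl}) (via $\Gamma(1,\cdot)=\exp(-\cdot)$ and a single-term sum over $k$) and (\ref{poutcliid}) collapse to the same elementary expression.
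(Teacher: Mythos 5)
Your route is the same one the paper takes: its proof of this corollary is a one-line remark that the averaging of Appendix~\ref{apppoutcl} is repeated with $Z$ now (scaled) $\mathcal{\chi}^{2}_{2L_{T}}$, i.e., with exactly the Erlang density you wrote down, and your execution via the binomial expansion of the affine argument of \eqref{uncpdfsnr} followed by a Gamma-type integral is the natural way to carry that out. You are also right that the i.n.i.d. partial-fraction form \eqref{fZ} degenerates in the co-located case, so restarting from the Erlang density rather than taking limits in \eqref{poutcl} is what is intended.

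However, your closing claim that ``assembling the pieces produces the claimed form'' does not hold as stated for $L_{T}>1$, and this is precisely the step you glossed over. Writing $a\triangleq \frac{\ln(2)\mathbb{E}[Y]N_{0}\gamma_{\rm th}}{\lambda\mathbb{E}[X]}+\frac{\mathcal{C}}{\mathbb{E}[X]}$, $b\triangleq \frac{\ln(2)\mathbb{E}[Y]p_{\rm p}\gamma_{\rm th}}{\lambda\mathbb{E}[X]}$ and $\eta\triangleq b+1/\mathbb{E}[Z_{\rm i.i.d.}]$, your own Gamma integral gives $\Gamma(s+L_{T})/\eta^{\,s+L_{T}}$, so the $(l,s)$ summand you actually obtain carries $(s+L_{T}-1)!$ in the numerator and $\eta^{\,s+L_{T}}$ in the denominator, whereas \eqref{poutcliid} as printed has $s!$ and $\eta^{\,s+1}$; the two coincide only for $L_{T}=1$. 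A dimensional check (the prefactor $\mathbb{E}[Z_{\rm i.i.d.}]^{-L_{T}}$ must be balanced by an $L_{T}$-dependent power of $\eta$), or the $M=N$ case, where the Erlang moment generating function gives the exact answer $1-\exp(-a)\left(1+b\,\mathbb{E}[Z_{\rm i.i.d.}]\right)^{-L_{T}}$ while \eqref{poutcliidmn} is only first order in $1/\eta$, confirms that your computation is the correct one and that the printed corollary is off in exactly those two places (a typo the paper's terse proof does not expose). Note also that your proposed sanity check at $L_{T}=1$ cannot detect this, since that is the single value at which the printed and correctly assembled expressions agree. So the method is identical to the paper's and sound, but you must either display the assembly explicitly—landing on $(s+L_{T}-1)!$ and $\eta^{\,s+L_{T}}$—or state that \eqref{poutcliid} holds with those corrections, rather than asserting literal agreement with the printed formula.
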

\begin{proof}
By using a similar approach as that in Appendix~\ref{apppoutcl}, while noting that $Z\overset{\text{d}}=\mathcal{\chi}^{2}_{2 L_{T}}$, \eqref{poutcliid} can be readily derived.
\end{proof}

Moreover, similar to \eqref{poutcl11}, if $M=N$, \eqref{poutcliid} can be further reduced to
\begin{align}
\nonumber
&P^{(i)}_{\rm out}(\gamma_{\rm th})=\\
&1-\frac{\exp\left(-\frac{\ln{(2)}\mathbb{E}[Y]N_{0}\gamma_{\rm th}}{\lambda \mathbb{E}[X]}-\frac{\mathcal{C}}{\mathbb{E}[X]}\right)}{(L_{T}-1)!\mathbb{E}[Z_{\rm i.i.d.}]^{L_{T}}\left(\frac{\ln{(2)}\mathbb{E}[Y]p_{\rm p}\gamma_{\rm th}}{\lambda \mathbb{E}[X]}+\frac{1}{\mathbb{E}[Z_{\rm i.i.d.}]}\right)}.
\label{poutcliidmn}
\end{align}

\subsection{Asymptotic Analysis of Large Number of Secondary Tx/Rx Antennas}
Currently, massive MIMO systems are widely recognized as a cornerstone of the forthcoming 5G wireless communications \cite{j:massive}. From the information-theoretic point of view, the concept of massive MIMO implies that the number of Tx/Rx antennas becomes large (e.g., tens or hundreds) in practice and even approaches infinity in theory, i.e., $M$ and/or $N \rightarrow +\infty$. The main benefit arising from adopting this approach is the so-called \emph{channel hardening} effect, i.e., small-scale fading tends to vanish. Moreover, if $N \gg M$, the achievable spatial DoF and spectral efficiency are significantly enhanced. Accordingly, in this part we investigate the asymptotic received SINR at SR in the sense of large number of secondary Tx/Rx antennas.

\underline{\emph{Case i}}: $N \rightarrow +\infty$, while  $L_{T}$ and $M$ remain finite.

This scenario corresponds to the case where the secondary receiver consists of a massive MIMO array in the presence of $L_T$ distributed single-antenna primary nodes or a conventional MIMO PT with $L_T$ Tx antennas and a finite number of secondary Tx antennas.

{\color{black}
\begin{corollary}
Let $N$ tend to $\infty$, the number of primary and secondary Tx antennas ($L_T$ and $M$, respectively) are finite. The received SINR of the $i^{\rm th}$ secondary data stream grows asymptotically without any restriction, such that
\begin{equation}
{\rm SINR}_{i} \rightarrow \frac{\left(\frac{\lambda \mathbb{E}[X]}{\ln(2)\mathbb{E}[Y]}\right) (N-M+1)}{p_{\rm p}Z+N_{0}}\rightarrow +\infty.
\label{sinridistrrr1}
\end{equation}
\end{corollary}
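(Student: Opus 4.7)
The plan is to substitute the closed form $p_i^\star X_i = \bigl(\lambda X_i/(\ln(2)\mathbb{E}[Y]) - (p_{\rm p}\mathbb{E}[Z]+N_0)\bigr)^+$ from \eqref{snri} into the distributional identity for ${\rm SINR}_i$ in Lemma~\ref{Lemma-1}, and then argue that as $N\to\infty$ with $M$ and $L_T$ fixed, the numerator grows linearly in $N$ whereas the denominator stays stochastically $O(1)$.

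The argument rests on two ingredients. The first is channel hardening for $X_i=\sum_{l=1}^{N-M+1}|h_l^{(i)}|^2$: as a sum of $N-M+1$ i.i.d.\ exponential RVs each of mean $\mathbb{E}[X]$, the strong law of large numbers gives $X_i\sim(N-M+1)\mathbb{E}[X]$ almost surely. The second is that the Lagrangian multiplier $\lambda$ determined by \eqref{langrangecl} converges to a strictly positive finite constant: using $\Gamma(N{-}M{+}1,\mathcal{C}/\mathbb{E}[X])/\Gamma(N{-}M{+}1)\to 1$ and $\Gamma(N{-}M,\mathcal{C}/\mathbb{E}[X])/\Gamma(N{-}M{+}1)\to 0$, the $N>M$ branch of \eqref{langrangecl} collapses to the $N$-independent relation $\lambda\,\mathbb{E}[X]^2/(\ln(2)\mathbb{E}[Y]) = \min\{Q/(M\mathbb{E}[Y]),\,p_{\max}/M\}$. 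Combining these, $\lambda X_i/(\ln(2)\mathbb{E}[Y])$ grows like $N$ and eventually dominates the constant offset $p_{\rm p}\mathbb{E}[Z]+N_0$, so the $(\cdot)^+$ clip in \eqref{snri} is inactive for $N$ large enough and $p_i^\star X_i \sim \lambda(N-M+1)\mathbb{E}[X]/(\ln(2)\mathbb{E}[Y])$. Because $L_T$ is held fixed, $Z=p_{\rm p}\sum_{j=1}^{L_T}|z_j|^2$ has an $N$-independent distribution, so $p_{\rm p}Z+N_0=O(1)$; dividing the asymptotic numerator by this bounded denominator recovers the right-hand side of \eqref{sinridistrrr1}, and linear growth in $N$ delivers the divergence.

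The most delicate step is controlling $\lambda$: one must rule out a pathological regime in which the $p_{\max}/M$ constraint drags $\lambda$ down like $1/(N-M+1)$, which would cancel the channel-hardening gain. This is excluded by the monotonicity of the left-hand side of \eqref{langrangecl} in $\lambda$ together with the $N$-independence of its right-hand side, and that monotonicity check is the only genuinely non-routine ingredient I would need to flesh out in the formal write-up.
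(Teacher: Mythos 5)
Your argument is correct and follows essentially the paper's own route: channel hardening of $X_i$ (the paper writes $p^{\star}_{i}X_{i}=\bigl(\tfrac{\lambda \mathbb{E}[X]}{\ln(2)\mathbb{E}[Y]}\bigr)\mathbf{b}^{\mathcal{H}}\mathbf{b}\to\bigl(\tfrac{\lambda \mathbb{E}[X]}{\ln(2)\mathbb{E}[Y]}\bigr)(N-M+1)$) makes the numerator grow linearly in $N$, while $p_{\rm p}Z+N_{0}$ stays bounded in distribution because $L_T$ is fixed. The control of $\lambda$ that you flag as the delicate step is exactly what the paper establishes separately in Corollary~\ref{optpowermassivemimo} via \eqref{asylambdaa}, and it is immediate from the integral form \eqref{langrangee}: since $\mathbb{E}_{X_i}[p_i]\le \lambda/(\ln(2)\mathbb{E}[Y])$, the $N$-independent right-hand side forces $\lambda\ge \ln(2)\,\mathbb{E}[Y]\min\{Q/(M\mathbb{E}[Y]),p_{\max}/M\}>0$ for every $N$, so no pathological vanishing of $\lambda$ can occur.
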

}

\begin{proof}
By introducing the auxiliary variable vector $\mathbf{b}\overset{\text{d}}=\mathcal{CN}(\mathbf{0},\mathbf{I}_{N-M+1})$, while based on (\ref{xidistr}) and invoking the central limit theorem, (\ref{snri}) becomes
\begin{align}
\nonumber
p^{\star}_{i}X_{i}&=\left(\frac{\lambda \mathbb{E}[X]}{\text{ln}(2)\mathbb{E}[Y]}\right)\mathbf{b}^{\mathcal{H}}\mathbf{b}\\
&\rightarrow \left(\frac{\lambda \mathbb{E}[X]}{\text{ln}(2)\mathbb{E}[Y]}\right) (N-M+1),~\text{as~} N\rightarrow +\infty.
\label{xdistrasy}
\end{align}
Clearly, the effective channel gain of each received data stream is greatly enhanced in this case. Also, since the denominator of \eqref{sinridistr} is bounded, the desired result is directly extracted.
\end{proof}

\underline{\emph{Case ii}}: $M$, $N$ and $L_{T}$ tend to $\infty$ while $N/M \triangleq \kappa < \infty$.

This scenario can be realized when both the secondary and primary systems are equipped with massive MIMO antenna arrays. Similar to \cite{j:MatthaiouZFMassiveMIMO2013}, $Z$ in (\ref{zdistr}) can be rewritten in a quadratic form, i.e.,
\begin{align}
Z\triangleq \mathbf{a}^{\mathcal{H}}\mathbf{D}\mathbf{a},
\label{zdistrasy}
\end{align}
where $\mathbf{a}\overset{\text{d}}=\mathcal{CN}(\mathbf{0},\mathbf{I}_{L_{T}})$ and $\mathbf{D}=\diag\{\mathbb{E}[Z_{k}]\}^{L_{T}}_{k=1}$.

\begin{corollary}
\label{cordeterm}
In case $M$, $N$ and $L_{T}$ tend to infinity while $N/M \triangleq \kappa < \infty$, the received SINR of the $i^{\rm th}$ secondary data stream approaches a constant, given by
\begin{eqnarray}
\nonumber
{\rm SINR}_{i}
& \rightarrow & \frac{(N-M+1)\frac{\lambda \mathbb{E}[X]}{\ln{(2)}\mathbb{E}[Y]}}{\frac{p_{\rm p}}{L_{T}}\Tr[\mathbf{D}]}  \\
&         =        & \frac{(N-M+1)\frac{\lambda \mathbb{E}[X]}{\ln{(2)}\mathbb{E}[Y]}}{\frac{p_{\rm p}}{L_{T}}\sum\limits^{L_{T}}_{k=1}\mathbb{E}[Z_{k}]}.
\label{sinridistr1}
\end{eqnarray}
\end{corollary}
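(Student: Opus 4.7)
The plan is to treat the numerator and denominator of \eqref{sinridistr} separately and then stitch the two deterministic equivalents together.

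For the numerator $p^{\star}_{i}X_{i}$, I would carry over the argument already used for Case~i above. Since $X_{i}$ is a sum of $N-M+1$ i.i.d.\ exponential channel gains with common mean $\mathbb{E}[X]$, the strong law of large numbers yields $X_{i}/(N-M+1)\to \mathbb{E}[X]$ almost surely as $N-M\to +\infty$; in the regime $N/M=\kappa<\infty$ with $M,N\to+\infty$ this requires $\kappa>1$, which is implicit in the massive MIMO setting under consideration. Substituting into \eqref{snri}, the deterministic offset $p_{\rm p}\mathbb{E}[Z]+N_{0}$ is dominated by the linearly growing first term, so the argument of $(\cdot)^{+}$ is positive and $p^{\star}_{i}X_{i}\to (N-M+1)\lambda\mathbb{E}[X]/(\ln(2)\mathbb{E}[Y])$, reproducing the numerator in \eqref{sinridistr1}.

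For the denominator, I would invoke the standard trace lemma for quadratic forms in Gaussian vectors, applied to $Z=\mathbf{a}^{\mathcal{H}}\mathbf{D}\mathbf{a}$ with $\mathbf{a}\sim \mathcal{CN}(\mathbf{0},\mathbf{I}_{L_{T}})$ and $\mathbf{D}=\diag\{\mathbb{E}[Z_{k}]\}^{L_{T}}_{k=1}$ as in \eqref{zdistrasy}. Because $\mathbf{D}$ is diagonal this reduces to a weighted sum of i.i.d.\ unit-mean exponentials, and a Chebyshev bound with variance $\sum_{k}\mathbb{E}[Z_{k}]^{2}$ gives, under a uniform bound on $\max_{k}\mathbb{E}[Z_{k}]$, that $\tfrac{1}{L_{T}}\bigl|\mathbf{a}^{\mathcal{H}}\mathbf{D}\mathbf{a}-\Tr[\mathbf{D}]\bigr|\to 0$ almost surely as $L_{T}\to+\infty$. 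This concentrates $Z$ around its mean; after rescaling the denominator by $1/L_{T}$ the finite-noise term $N_{0}/L_{T}$ vanishes and the primary-interference contribution takes exactly the form $\tfrac{p_{\rm p}}{L_{T}}\Tr[\mathbf{D}]$ displayed in \eqref{sinridistr1}. Combining these two almost-sure limits using the independence of $X_{i}$ and $Z$ from Lemma~\ref{Lemma-1} and applying the continuous-mapping theorem to the reciprocal completes the proof.

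The main obstacle is not the mechanics of either LLN or trace-lemma step, but rather the technical bookkeeping forced by the simultaneous triple limit $M,N,L_{T}\to+\infty$: one needs an explicit uniform bound on $\max_{k}\mathbb{E}[Z_{k}]$ for the trace lemma to apply, and must exploit the fact that the numerator depends only on $N-M$ while the denominator depends only on $L_{T}$, so the two sequences converge along independent axes and can be joined cleanly by a product argument. A secondary subtlety worth flagging is that the Lagrange multiplier $\lambda$ coming out of \eqref{langrangecl} must itself remain bounded in this asymptotic regime; otherwise the clean ratio in \eqref{sinridistr1} would need an additional normalization.
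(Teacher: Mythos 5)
Your argument is essentially the paper's own proof: the numerator limit is the channel-hardening result already established for Case~i (Eq.~\eqref{xdistrasy}), and the denominator is handled by concentration of the quadratic form $Z=\mathbf{a}^{\mathcal{H}}\mathbf{D}\mathbf{a}$ around $\Tr[\mathbf{D}]/L_{T}$, which the paper obtains by citing the trace lemma of \cite[Lemma 4]{j:HoydisBrink} while you re-derive it via a Chebyshev bound under a uniform bound on $\max_{k}\mathbb{E}[Z_{k}]$. The technical caveats you flag (boundedness of $\lambda$, $\kappa>1$, the $1/L_{T}$ normalization) are reasonable and consistent with the paper's conventions, so the proposal is correct and follows the same route.
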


\begin{proof} 
It is clear that $Z\triangleq \mathbf{a}^{\mathcal{H}}\mathbf{D}\mathbf{a}=\Tr[\mathbf{D}]/L_{T}$, as $L_{T}\rightarrow +\infty$ \cite[Lemma 4]{j:HoydisBrink}. Hence, substituting \eqref{xdistrasy} and \eqref{zdistrasy} into \eqref{sinridistr} and dividing both the numerator and denominator of \eqref{sinridistr} with $N-M+1$, yields the desired \eqref{sinridistr1}.
\end{proof}

Note that both \eqref{sinridistrrr1} and \eqref{sinridistr1} confirm the channel hardening effect (i.e., the small-scale fading coincides with its average). However, when $L_{T} \rightarrow +\infty$, the received SINR is bounded, whereas both the secondary Tx power and primary-to-secondary interference play a key role to the overall performance of secondary transmission. Only in the scenario when $N \gg M$ and $N \gg L_{T}$, \eqref{sinridistr1} grows infinitely, while it is reduced to \eqref{sinridistrrr1}.

\begin{corollary}
\label{optpowermassivemimo}
In the case when $N$ tends to infinity with arbitrary $M \geq 1$, the optimal Tx power for each secondary data stream converges to the following deterministic value
\begin{align}
p^{\star}_{i}\rightarrow \min\left\{\frac{Q}{M \mathbb{E}[Y]},\frac{p_{\max}}{M}\right\},\ \ \forall i.
\label{determpower}
\end{align}
\end{corollary}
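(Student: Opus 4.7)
The plan is to show that as $N \to \infty$, the random term $(p_{\rm p}\mathbb{E}[Z]+N_{0})/X_{i}$ in the optimal power expression \eqref{optpow} vanishes, so that $p^{\star}_{i}$ degenerates to a deterministic constant, and then identify that constant through the active constraint.

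First, I would invoke channel hardening on $X_{i}=\sum_{l=1}^{N-M+1}|h^{(i)}_{l}|^{2}$, which is a sum of $N-M+1$ i.i.d.\ unit-mean exponential random variables (cf.\ Lemma \ref{Lemma-1}). By the strong law of large numbers, $X_{i}/(N-M+1)\to \mathbb{E}[|h|^{2}]$ almost surely, and in particular $X_{i}\to+\infty$ a.s.\ as $N\to\infty$ for any fixed $M\geq 1$. Consequently, since $p_{\rm p}\mathbb{E}[Z]+N_{0}$ is finite and independent of $N$, the second term inside the $(\cdot)^{+}$ bracket of \eqref{optpow} tends to zero a.s., and the clipping by $(\cdot)^{+}$ is inactive for all sufficiently large $N$. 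Hence
\begin{equation*}
p^{\star}_{i}\ \longrightarrow\ \frac{\lambda}{\ln(2)\,\mathbb{E}[Y]},
\end{equation*}
a deterministic limit that does not depend on the index $i$.

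Next, I would fix the value of this limit using the Lagrangian condition. Since $p^{\star}_{i}$ becomes deterministic and identical across $i$, the expectation collapses and \eqref{langrange} reduces to $p^{\star}_{i}=Q/(M\mathbb{E}[Y])$ whenever the IT constraint is the binding one. If instead the per-node power budget $p_{\max}/M$ binds first, then by the same structure of the problem (cf.\ the $\min$ appearing in \eqref{langrangecl}) the limiting value is $p_{\max}/M$. Combining both situations yields the announced expression
\begin{equation*}
p^{\star}_{i}\ \longrightarrow\ \min\!\left\{\frac{Q}{M\mathbb{E}[Y]},\frac{p_{\max}}{M}\right\}, \quad \forall i.
\end{equation*}

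The only delicate point is justifying that the deterministic limit of $p^{\star}_{i}$ coincides with the right-hand side of \eqref{langrange}, i.e., that one may pass the limit through the expectation. This is, however, immediate here: since $p^{\star}_{i}$ is bounded by the hardware cap $p_{\max}/M$, dominated convergence applies and $\mathbb{E}_{X_{i}}[p^{\star}_{i}]\to p^{\star}_{i,\infty}$. The rest is purely algebraic, so I expect no substantive obstacle beyond clearly stating the almost-sure convergence of $X_{i}$ and noting that the $(\cdot)^{+}$ and $\min\{\cdot,\cdot\}$ operations commute with the limit in this regime.
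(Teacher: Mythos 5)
Your proposal is correct and takes essentially the same route as the paper: the paper also lets the term $\left(p_{\rm p}\mathbb{E}[Z]+N_{0}\right)/X_{i}$ vanish as $N\rightarrow\infty$ and then pins the deterministic limit $\lambda/\bigl(\ln(2)\,\mathbb{E}[Y]\bigr)$ down via the averaged constraint in \eqref{langrangee}, obtaining $\lambda\rightarrow\min\{\ln(2)Q/M,\ \ln(2)\mathbb{E}[Y]p_{\max}/M\}$ and hence \eqref{determpower}. Your explicit appeal to the strong law of large numbers for $X_{i}$ and to dominated convergence for passing the limit through $\mathbb{E}_{X_{i}}[p^{\star}_{i}]$ merely makes rigorous the limit exchanges the paper performs informally.
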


\begin{proof}
Using the first equality of \eqref{langrangee} and assuming that $N\rightarrow \infty$, it holds that
\begin{align}
\nonumber
&p^{\star}_{i} = \min\left\{\frac{Q}{M\mathbb{E}[Y]},\frac{p_{\max}}{M}\right\}\\
\nonumber
\Longleftrightarrow&\left(\frac{\lambda}{\text{ln}(2)\mathbb{E}[Y]}-\frac{\left(p_{\rm p}\mathbb{E}[Z]+N_{0}\right)}{X_{i}}\right)= \min\left\{\frac{Q}{M\mathbb{E}[Y]},\frac{p_{\max}}{M}\right\}\\
\nonumber
\Longrightarrow & \frac{\lambda}{\text{ln}(2)\mathbb{E}[Y]} \rightarrow \min\left\{\frac{Q}{M\mathbb{E}[Y]},\frac{p_{\max}}{M}\right\},\ \ N\rightarrow \infty\\
\Longrightarrow & \lambda \rightarrow \min\left\{\frac{\ln{(2)} Q}{M},\frac{\ln{(2)}\mathbb{E}[Y]p_{\max}}{M}\right\}.
\label{asylambdaa}
\end{align}
Hence, by recalling \eqref{optpow} and using the latter expression, the desired result is obtained.
\end{proof}

\underline{\emph{Case iii}}: $M$ and $N \rightarrow +\infty$ while $N/M \triangleq \kappa < \infty$, and $L_{T}$ remains finite.

This scenario corresponds to the case where the secondary link consists of massive MIMO arrays in the presence of $L_T$ distributed single-antenna primary nodes or a conventional MIMO PT with $L_T$ Tx antennas.

{\color{black}
\begin{corollary}
Let $M$ and $N$ tend to $\infty$ while their ratio remains constant, i.e., $N/M \triangleq \kappa < \infty$, and the number of primary Tx antennas ($L_T$) is finite. The received SINR of the $i^{\rm th}$ secondary data stream approaches the following constant
\begin{align}
\nonumber
{\rm SINR}_{i} &\rightarrow \frac{\left(\frac{\lambda \mathbb{E}[X]}{\ln(2)\mathbb{E}[Y]}\right) (N-M+1)}{p_{\rm p}Z+N_{0}}\\
&\rightarrow \frac{\left(\frac{\min\left\{Q,\mathbb{E}[Y]p_{\max}\right\} \mathbb{E}[X]}{\mathbb{E}[Y]}\right) (\kappa-1)}{(p_{\rm p}Z+N_{0})}.
\label{sinridistrrrrr1}
\end{align}
\end{corollary}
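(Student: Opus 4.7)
The plan is to combine two asymptotic facts already established in this section: the scaling of $p^{\star}_{i}X_{i}$ from Case i in equation \eqref{xdistrasy}, and the deterministic limit of the Lagrange multiplier $\lambda$ from Corollary \ref{optpowermassivemimo}. Because $L_{T}$ is held finite, the denominator $p_{\rm p}Z+N_{0}$ of \eqref{sinridistr} stays a bounded random quantity, so all the work lies in tracking the numerator $p^{\star}_{i}X_{i}$. I would therefore prove the two arrows of the statement in order, the first one driven by the limiting behavior of $X_{i}$ and the second one by the limiting behavior of $\lambda$.

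For the first arrow, I would revisit the calculation leading to \eqref{xdistrasy}. By \eqref{xidistr}, $X_{i}$ is a sum of $N-M+1 = M(\kappa-1)+1$ i.i.d. exponential channel gains with common mean $\mathbb{E}[X]$; this count still diverges under $M,N\to\infty$ with $N/M=\kappa$, so the law of large numbers delivers $X_{i}\approx (N-M+1)\mathbb{E}[X]$. Substituting into \eqref{snri} and discarding the subdominant additive constant inside the positive-part wrap (which is inactive in the non-degenerate regime where the limiting SINR is positive) gives
\begin{equation*}
p^{\star}_{i}X_{i}\;\longrightarrow\;\frac{\lambda\,\mathbb{E}[X]}{\ln(2)\,\mathbb{E}[Y]}\,(N-M+1),
\end{equation*}
which, divided by $p_{\rm p}Z+N_{0}$, is exactly the first arrow of the claim.

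For the second arrow I would invoke Corollary \ref{optpowermassivemimo}, whose derivation at \eqref{asylambdaa} yields $\lambda\to\min\{\ln(2)Q/M,\,\ln(2)\mathbb{E}[Y]p_{\max}/M\}$ in the large-$N$ regime. Inserting this limit into the previous display, the factors of $\ln(2)$ and $\mathbb{E}[Y]$ cancel cleanly and one arrives at
\begin{equation*}
\frac{\lambda\,\mathbb{E}[X]}{\ln(2)\,\mathbb{E}[Y]}\,(N-M+1)\;\longrightarrow\;\frac{\mathbb{E}[X]\,\min\{Q,\mathbb{E}[Y]p_{\max}\}}{\mathbb{E}[Y]}\cdot\frac{N-M+1}{M}.
\end{equation*}
Under $N/M\to\kappa$ the ratio $(N-M+1)/M$ tends to $\kappa-1$, which, divided by $p_{\rm p}Z+N_{0}$, gives the desired limit.

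The hard part will be justifying that the Lagrange-multiplier asymptotics of Corollary \ref{optpowermassivemimo} remain valid when $M$ is itself allowed to diverge. That corollary was stated for finite $M$ with $N\to\infty$, and its derivation at \eqref{asylambdaa} hinges on the incomplete-Gamma factors in \eqref{langrangecl} being well approximated by their leading exponential terms, a property driven by $N-M\to\infty$. Since here $N-M=M(\kappa-1)\to\infty$, the same dominance is still in force, but I would want to verify that the approximation error is uniform in $M$ before accepting the joint limit. Once that uniformity is confirmed, the rest of the proof is a direct algebraic rearrangement.
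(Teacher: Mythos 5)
Your proposal matches the paper's own proof: the paper obtains \eqref{sinridistrrrrr1} precisely by substituting the asymptotic Lagrange multiplier from \eqref{asylambdaa} into the first equality of \eqref{sinridistrrr1} and simplifying, which is exactly your two-step argument (channel hardening for $X_i$, then $\lambda\to\min\{\ln(2)Q/M,\ln(2)\mathbb{E}[Y]p_{\max}/M\}$ and $(N-M+1)/M\to\kappa-1$). Your closing caution about the uniformity of the $\lambda$ asymptotics in $M$ goes beyond what the paper itself verifies, but it does not alter the route, which is essentially identical.
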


\begin{proof}
The result in \eqref{sinridistrrrrr1} can be easily obtained by substituting \eqref{asylambdaa} in the first equality of \eqref{sinridistrrr1} and after performing some straightforward manipulations.
\end{proof}
}

\section{Mitigating Unexpected Excessive Interference to Primary Users}
\label{secimpact}
When cognitive transmission technique is deployed in practice, it is almost infeasible or not affordable for secondary system to obtain accurate instantaneous CSI between numerous secondary and primary nodes \cite{j:Zhang2009}. As a result, an average tolerable IT threshold, $Q$, is widely used at primary nodes, to maintain a prescribed link quality between primary nodes. Although IT is an effective performance measure to guarantee the transmission quality of primary users and to enhance the performance of secondary users, it may introduce excessive instantaneous interference to primary users. This phenomenon is widely known as \emph{interference leakage} \cite{j:AliTransmitPrecoding}. To mitigate the effect of interference leakage, in this section an iterative antenna reduction algorithm is proposed.

Mathematically, an event of interference leakage can be defined as
\begin{equation}
\mathcal{L}^{(M)}: \ \min_{j}\left\{\sum^{M}_{i=1}\left(p^{\star}_{i}|y^{(i)}_{j}|^{2}\right)>Q\right\}, \, \forall j \in [1, L_{R}],
\label{probinstant}
\end{equation} 
where the superscript ``$M$'' of $\mathcal{L}^{(M)}$ denotes the actual number of secondary Tx antennas, and $|y^{(i)}_{j}|^{2}$ stands for the instantaneous interfering power caused by the $i^{\rm th}$ antenna of ST to the $j^\text{th}$ PR. Due to the high complexity of \eqref{probinstant}, performing an exact analysis of the event of interference leakage is extremely difficult, if not impossible. Thus, in the following, the statistics associated with $\mathcal{L}^{(M)}$ are evaluated and then used for later algorithm development.

\begin{corollary} 
Given the channel gains between each secondary Tx antenna and a single-antenna receiver, i.e., $|h_{i}|^{2}$, $\forall i \in [1, M]$, the probability of interference leakage can be expressed as
\begin{align}
& \lefteqn{{\rm Pr}\left[\mathcal{L}^{(M)} > Q\biggr\rvert \left\{|h_{i}|^{2}\right\}^{M}_{i=1}\right]}  \nonumber \\
& = \prod^{L_{R}}_{j=1}\left[\sum^{M}_{i=1}\left(\prod^{i}_{k=1, k \neq i}\frac{p^{\star}_{i}}{p^{\star}_{i}-p^{\star}_{k}}\right) 
\exp\left(\frac{-Q}{p^{\star}_{i} \, \mathbb{E}\left[\left|Y^{(i)}_{j}\right|^{2}\right]}\right)\right].
\label{probinstant1}
\end{align} 
\end{corollary}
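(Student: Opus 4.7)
The plan is to exploit two layers of independence: first across the $L_R$ primary receivers, and then across the $M$ secondary Tx antennas once we condition on $\{|h_i|^2\}$. First I would unpack the definition of $\mathcal{L}^{(M)}$: since $\mathcal{L}^{(M)}$ is the minimum over $j$ of the per-receiver totals $\sum_{i=1}^{M} p_i^\star |y_j^{(i)}|^2$, the event $\{\mathcal{L}^{(M)} > Q\}$ coincides with the intersection of the $L_R$ events $\{\sum_i p_i^\star |y_j^{(i)}|^2 > Q\}$. Under the independent Rayleigh fading assumption, the channels $|y_j^{(i)}|^2$ for different $j$ are mutually independent, and once we condition on $\{|h_i|^2\}_{i=1}^{M}$ the powers $p_i^\star$ become deterministic via \eqref{optpow}. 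Hence the conditional probability factors into the outer product over $j$ that appears in \eqref{probinstant1}.

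Second, for each fixed $j$ I would identify the conditional law of the inner sum. Given $\{|h_i|^2\}$, each summand $p_i^\star |y_j^{(i)}|^2$ is an exponential RV with mean $\mu_i^{(j)} \triangleq p_i^\star \mathbb{E}[|Y_j^{(i)}|^2]$, and the $M$ summands are mutually independent. Because the Tx antenna spacing at ST is negligible compared with the ST-to-PR distances — the proximity argument invoked just before $\mathcal{P}_1$ — the quantity $\mathbb{E}[|Y_j^{(i)}|^2]$ does not depend on $i$; call the common value $\mathbb{E}[Y^{(j)}]$. Thus the inner sum is a hypoexponential RV with (almost surely distinct) rates $1/\mu_i^{(j)}$.

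Third, I would evaluate the hypoexponential CCDF by partial-fraction expansion of the Laplace transform $\prod_{i=1}^{M}(1+s\mu_i^{(j)})^{-1} = \sum_i c_i(1+s\mu_i^{(j)})^{-1}$ with residues $c_i = \prod_{k\neq i} \mu_i^{(j)}/(\mu_i^{(j)}-\mu_k^{(j)})$; inverting and integrating on $(Q,\infty)$ gives
\[
\Pr\!\left[\sum_{i=1}^{M} p_i^\star |y_j^{(i)}|^2 > Q \,\Big|\, \{|h_i|^2\}\right] = \sum_{i=1}^{M} \prod_{k\neq i} \frac{\mu_i^{(j)}}{\mu_i^{(j)}-\mu_k^{(j)}} \exp\!\left(-\frac{Q}{\mu_i^{(j)}}\right).
\]
The common factor $\mathbb{E}[Y^{(j)}]$ cancels in each ratio, so $\mu_i^{(j)}/(\mu_i^{(j)}-\mu_k^{(j)}) = p_i^\star/(p_i^\star - p_k^\star)$, reproducing the coefficient inside the brackets of \eqref{probinstant1}; the exponent correspondingly takes the stated form $-Q/(p_i^\star \mathbb{E}[|Y_j^{(i)}|^2])$, and taking the product over $j$ completes the derivation.

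The main obstacle is the partial-fraction step, which requires the rates $\mu_i^{(j)}$ — equivalently the $p_i^\star$ — to be pairwise distinct so that the residues are well defined. Since the $X_i$ in \eqref{xidistr} have a continuous joint law and $p_i^\star$ depends continuously on $X_i$ through \eqref{optpow}, coincidences among the $p_i^\star$ occur with probability zero, so the formula holds almost surely; the measure-zero degenerate case can be handled by a standard limit. Everything else is routine algebraic simplification and reindexing.
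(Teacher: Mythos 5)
Your proposal is correct and follows essentially the same route as the paper: factor the event over the $L_R$ receivers via the CCDF of a minimum of independent RVs, then evaluate each per-receiver term as the CCDF of a sum of i.n.i.d. exponentials (the paper invokes its hypoexponential PDF result used for $f_Z$, which is exactly your partial-fraction computation). Your added observations --- that the common factor $\mathbb{E}\bigl[|Y^{(i)}_{j}|^{2}\bigr]$ cancels in the residue ratios so only the $p^{\star}_{i}$ remain, and that the required distinctness of the $p^{\star}_{i}$ holds almost surely --- are details the paper leaves implicit, and they are handled correctly.
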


\begin{proof}
It is evident that the summand term in \eqref{probinstant} consists of an $M$-sum series of i.n.i.d. exponential RVs, given $|h_{i}|^{2}$, $\forall i \in [1, M]$. By using a similar approach to that used to derive \eqref{fZ} and recalling the fact that the CCDF of the minimum of $L$ independent RVs is the $L$-product of CCDFs of each individual RV, \eqref{probinstant1} can be readily obtained. 
\end{proof}

\subsection{Proposed Iterative Antenna Reduction Algorithm}
To mitigate unexpected excessive interference from secondary to primary users, an iterative antenna reduction algorithm is developed in this part to deactivate some secondary Tx antennas in the case when the probability of interference leakage (computed by \eqref{probinstant1}) is higher than a prescribed threshold, say, $\mathcal{T}_{G}$. Specifically, this process includes four steps.
\begin{enumerate}
	\item With the prescribed interference threshold $Q > 0$ and interference leakage threshold $\mathcal{T}_{G} > 0$, ST initially uses its all $M$ Tx antennas to radiate signals, and calculates the probability of interference leakage as per \eqref{probinstant1}.
	\item If the computed probability of interference leakage is no larger than $\mathcal{T}_{G}$, ST continues to transmit over all Tx antennas.
	\item Otherwise, ST removes the Tx antenna that causes the highest average interference to the primary system and uses the remaining $M-1$ antennas for subsequent data transmission. Moreover, it calculates again the probability of interference leakage according to \eqref{probinstant1}.
	\item Steps 2) and 3) are repeated until no Tx antenna is available. In such a case, the secondary transmission has to be suspended until the next 
transmission slot arrives.
\end{enumerate}

It is noteworthy that in the proposed approach a fixed on/off strategy is applied at secondary Tx antennas. This strategy is suboptimal yet practical in real-life wireless networks where statistical CSI, rather than instantaneous CSI, are easier to acquire. In contrast, if instantaneous CSI is known, a more effective approach is to first turn off the Tx antenna which introduces the highest instantaneous interference to primary users. However, the latter approach is beyond the scope of this work.

For completeness of exposition, the proposed iterative antenna reduction approach is formalized in Algorithm~1.

\begin{algorithm}[t]
	\caption{Mitigating Unexpected Excessive Interference from Secondary to Primary Users}
	\begin{algorithmic}[1]
		 \INPUT{$M$, $Q$ and $\mathcal{T}_{G}$}
		 \OUTPUT{$M_{E}$ (the effective number of secondary Tx antennas)}
		  \WHILE{$M>0$}
			\STATE{Computing the probability of interference leakage as per Eq. \eqref{probinstant1}}
			\IF{The obtained value $\leq \mathcal{T}_{G}$}
				\STATE $M_{E} = M$;
				\STATE End of the algorithm;
			\ELSE $\:\:M=M-1$, such that the Tx antenna indicating $\max_{i}\mathbb{E}[|y^{(i)}_{j}|^{2}],\ 1\leq i\leq M,1\leq j\leq L_{R}$ is dropped.
				\STATE Go to Step 2;
			\ENDIF
		\ENDWHILE 
	\end{algorithmic}
\end{algorithm}

\subsection{Average Number of Active Secondary Tx Antennas}
According to the proposed Algorithm 1, the effective number of secondary Tx antennas, i.e., the number of active secondary Tx antennas, will change from time to time. Consequently, the average number of active secondary Tx antennas will be a valuable measure to account for the efficiency of secondary Tx antennas. In such a case, an estimation of the average number of active secondary Tx antennas will benefit system designers determining the real number of secondary Tx antennas.

Define $\mathcal{L}^{(l)}\triangleq \min_{j}\left\{\sum^{l}_{i=1}(p^{\star}_{i}\left|y^{(i)}_{j}\right|^{2}) > Q\right\}$. By recalling the proposed antenna reduction algorithm, three different cases will happen. More specifically, 
\begin{itemize}
	\item All the secondary Tx antennas are active if and only if ({\it iff}) $\text{Pr}[\mathcal{L}^{(M)} > Q] \leq \mathcal{T}_{G}$, which is equivalent to $\text{Pr}[\mathcal{L}^{(M)}\leq Q] > \mathcal{T}_{G}$. Therefore, the probability that such a case will happen is $F_{\mathcal{L}^{(M)}}(Q)\mathcal{U}\{\mathcal{T}_{G}\}$, where $\mathcal{U}\{y\}$ is the unitary step function such that $x\:\mathcal{U}\{y\}=x$ for $x>y$, while $x\:\mathcal{U}\{y\}=0$ for $x\leq y$.
	\item Given $1\leq l\leq M-1$, $l$ secondary Tx antennas will be activated {\it iff} $\text{Pr}[\mathcal{L}^{(l)} > Q] \leq \mathcal{T}_{G}$ and $\text{Pr}[\mathcal{L}^{(l+1)} > Q] > \mathcal{T}_{G}$. Since ${\rm Pr}[\mathcal{L}^{(l)} > Q] \leq \mathcal{T}_{G}$ is equivalent to ${\rm Pr}[\mathcal{L}^{(l)} \leq Q] > \mathcal{T}_{G}$, it is evident that the probability that such a case will happen is $F_{\mathcal{L}^{(l)}}(Q)\mathcal{U}\{\mathcal{T}_{G}\}\overline{F}_{\mathcal{L}^{(l+1)}}(Q)\mathcal{U}\{\mathcal{T}_{G}\}$.
	\item No secondary transmission is allowed {\it iff} ${\rm Pr}[\mathcal{L}^{(1)} > Q] > \mathcal{T}_{G}$. As a result, the probability that such a case will happen is $\overline{F}_{\mathcal{L}^{(1)}}(Q)\mathcal{U}\{\mathcal{T}_{G}\}$.
\end{itemize}

In summary, the probability mass function (PMF) of the active number of secondary Tx antennas is explicitly given by
\begin{align}
\nonumber
&\text{Pr}[M=l]=\\
&\left \{
\begin{tabular}{l l}
$\overline{F}_{\mathcal{L}^{(1)}}(Q)\mathcal{U}\{\mathcal{T}_{G}\},$ & $l=0$\\
$F_{\mathcal{L}^{(l)}}(Q)\mathcal{U}\{\mathcal{T}_{G}\}\overline{F}_{\mathcal{L}^{(l+1)}}(Q)\mathcal{U}\{\mathcal{T}_{G}\},$ & $1\leq l\leq M-1$\\
$F_{\mathcal{L}^{(M)}}(Q)\mathcal{U}\{\mathcal{T}_{G}\},$ & $l=M$
\end{tabular}
\right\}
\label{pmf}
\end{align}

\begin{proposition}
The average number of active secondary Tx antennas can be calculated by
\begin{align}
\nonumber
\overline{M}&=\left(\sum^{M-1}_{l=1}\left[l\:F_{\mathcal{L}^{(l)}}(Q)\mathcal{U}\{\mathcal{T}_{G}\}\overline{F}_{\mathcal{L}^{(l+1)}}(Q)\mathcal{U}\{\mathcal{T}_{G}\}\right]\right)\\
&\ \ \ +M F_{\mathcal{L}^{(M)}}(Q)\mathcal{U}\{\mathcal{T}_{G}\}.
\label{avM}
\end{align}
\end{proposition}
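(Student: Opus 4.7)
The plan is to derive \eqref{avM} as a direct application of the definition of expectation to the PMF given in \eqref{pmf}, so this will be a short proof that essentially mechanically combines previously established probabilities. First I would write
\begin{equation*}
\overline{M} \triangleq \mathbb{E}[M] = \sum_{l=0}^{M} l \cdot \text{Pr}[M=l],
\end{equation*}
treating the effective number of active secondary Tx antennas as a discrete random variable supported on $\{0,1,\dots,M\}$.

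Next I would substitute the three branches of \eqref{pmf} into this sum. The $l=0$ branch contributes $0 \cdot \overline{F}_{\mathcal{L}^{(1)}}(Q)\mathcal{U}\{\mathcal{T}_{G}\} = 0$ and hence drops out, which is exactly why the absorbing state ``no secondary transmission allowed'' does not appear on the right-hand side of \eqref{avM}. The intermediate branch for $1 \leq l \leq M-1$ contributes the summation term in \eqref{avM}, while the boundary case $l = M$ contributes the trailing $M\,F_{\mathcal{L}^{(M)}}(Q)\mathcal{U}\{\mathcal{T}_{G}\}$ term. Collecting these gives exactly the claimed expression.

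Honestly, there is no serious obstacle here: the conceptual work was already done in constructing the PMF \eqref{pmf} from the case analysis of the antenna-reduction algorithm. The only subtlety worth flagging in the proof is the consistency of the unit-step factors $\mathcal{U}\{\mathcal{T}_{G}\}$, which carry over verbatim from \eqref{pmf} and simply encode that the leakage-threshold $\mathcal{T}_{G}$ must be strictly positive for the corresponding terms to contribute. Beyond that, the derivation is a one-line substitution and requires no further machinery.
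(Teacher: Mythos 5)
Your proof is correct and follows exactly the paper's own route: apply the definition of expectation $\overline{M}=\sum_{l=0}^{M} l\,\mathrm{Pr}[M=l]$ and substitute the PMF in \eqref{pmf}, with the $l=0$ term vanishing. One small side remark: the factor $\mathcal{U}\{\mathcal{T}_{G}\}$ as defined in the paper encodes that the relevant probability exceeds the threshold $\mathcal{T}_{G}$ (i.e., $x\,\mathcal{U}\{y\}=x$ only when $x>y$), not merely that $\mathcal{T}_{G}>0$, but this does not affect the validity of your derivation.
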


\begin{proof}
By definition, it holds that 
\begin{align}
\overline{M}=\sum^{M}_{l=0}l \ \text{Pr}[M=l].
\label{avM1}
\end{align}
Thus, inserting (\ref{pmf}) into (\ref{avM1}) yields the desired result.
\end{proof}

Since $1-F_{\mathcal{L}^{(l)}}(Q) = \overline{F}_{\mathcal{L}^{(l)}}(Q)$ and $\overline{F}_{\mathcal{L}^{(l)}}(Q)$ can be computed according to \eqref{probinstant1} by setting $M = l$, the average number of active secondary Tx antennas given by \eqref{avM} can be analytically obtained. 

\section{Numerical Results and Discussions}
\label{Numerical Results}
In this section, numerical results are presented and compared with Monte-Carlo (MC) simulation results. In what follows, curves and circle-marks correspond to the analytical and simulation results, respectively. In the ensuing simulation experiments, all the involved average channel gains including $\mathbb{E}[X]$, $\mathbb{E}[Z_{k}]$, $\forall k \in [1, L_{T}]$, and $\mathbb{E}[Y^{(l)}]$, $\forall l \in [1, L_{R}]$, are determined by $(d/d_{\text{ref}})^{-\alpha}$, where $d$ is the Euclidean distance in the unit of meters between two nodes of interest, $d_{\text{ref}}$ is a reference distance of 100 meters (used for normalization), and $\alpha = 4$ denotes the path-loss exponent. In particular, the distance $d$ has three instances: $d_{\rm S_{T}-S_{R}}$, $d_{\rm P_{T}-S_{R}}$ and $d_{\rm S_{T}-P_{R}}$, which denote the distances from ST to SR, from PT to SR and from ST to PR, respectively. The noise variance is normalized, say, $N_{0}=1$, and other parameter setting includes $p_{\max} = 20$ dB, $Q = 7$ dB, $\gamma_{\rm th} = 3$ dB, and $p_{\rm p} = 10$ dB with respect to the normalized noise power.
 
Figure \ref{fig2} depicts the outage probability experienced at SR, where the distances $d_{\rm S_{T}-S_{R}}$ and $d_{\rm P_{T}-S_{R}}$ are fixed to $18$m and $56$m, respectively, while the value of $d_{\rm S_{T}-P_{R}}$ varies from $30$m to $100$m (other system parameter setting is specified in the title of the figure). Note that since channel gains are given in the form of $(d/d_{\rm ref})^{-\alpha}$, $d_{\rm S_{T}-S_{R}} = 25$m corresponds to $\mathbb{E}[X] = 256$, while $d_{\rm P_{T}-S_{R}} = 56$m corresponds to $\mathbb{E}[Z_{\rm i.i.d.}] = 10$. The same methodology is applied in the rest of this section to calculate the distance-dependent parameters. It is observed that, for a certain number of Rx antenna, i.e., $N$, increasing the value of $d_{\rm S_{T}-P_{R}}$, i.e., the distance between ST and PR (or equivalently increasing the IT dictated by primary users), the outage probability decreases significantly, since larger Tx power can be allocated at ST. On the other hand, for a fixed value of $d_{\rm S_{T}-P_{R}}$, increasing the number of Rx antennas (i.e., larger $N$) will decrease the outage probability as well, due to larger spatial reception diversity gain.

\begin{figure}[!t]
\centering
\includegraphics[trim=1.8cm 0.2cm 2.5cm 0cm, clip=true,totalheight=0.28\textheight]{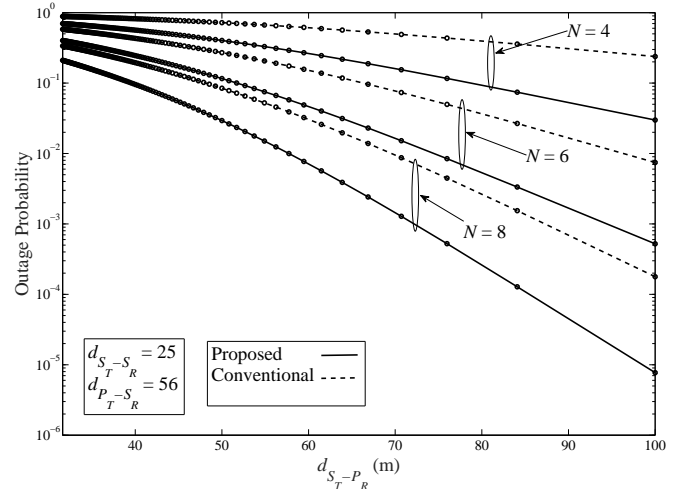}
\caption{Outage probability vs. the distance between ST and PR, where all the secondary-to-primary links are assumed i.i.d. Also, $L_{T} = L_{R} = 2$ and $M = 4$.}
\label{fig2}
\end{figure}

In addition, the benefit of the optimal Tx power allocation is illustrated. In particular, Figs.~\ref{fig2}-\ref{fig4} make a comparison of our optimal power allocation scheme with the conventional one. More specifically, the conventional (fixed) power-allocation can be expressed as
\begin{equation}
p^{(\text{conv.})}_{i}=\min\left\{\frac{Q}{M\mathbb{E}[Y]},\frac{p_{\max}}{M}\right\}.
\label{uncpdfsnrapprox}
\end{equation}
As seen from both Fig.~\ref{fig2} and Fig.~\ref{fig3}, the proposed scheme with optimal power allocation outperforms the conventional scheme. In particular, the superiority becomes more emphatic with larger value of $d_{\rm S_{T}-S_{R}}$, since the effect of power allocation dominates the outage probability when the distance between ST and SR is large. Also, Fig.~\ref{fig3} shows that the outage probability decreases with larger number of Rx antennas, as expected. 

\begin{figure}[!t]
\centering
\includegraphics[trim=1.8cm 0.0cm 2.5cm 0cm, clip=true,totalheight=0.28\textheight]{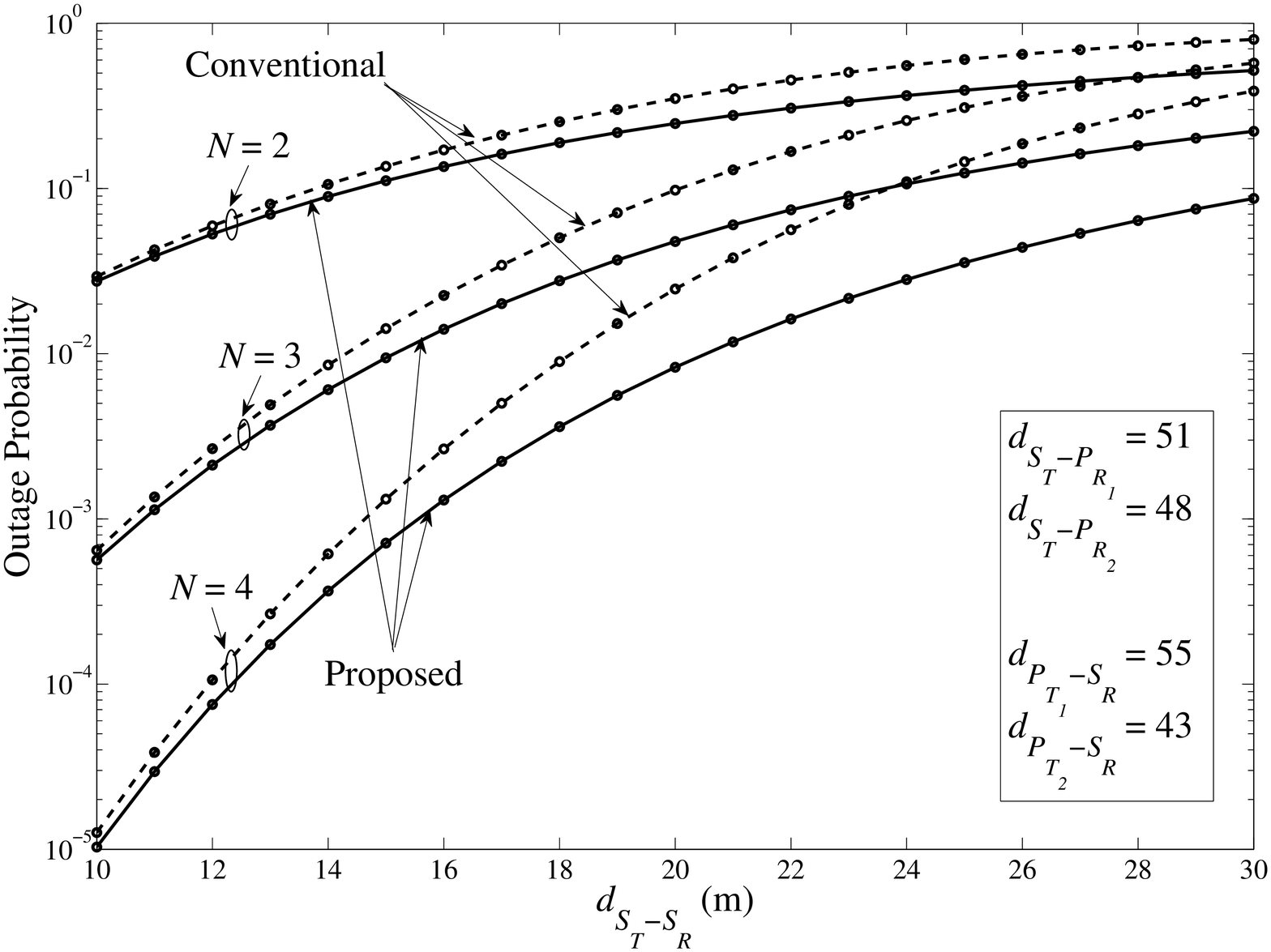}
\caption{Outage probability vs. the distance between ST and SR, where all the secondary-to-primary links are assumed i.n.i.d. Also, $L_{T}=L_{R}=2$ and $M=2$.}
\label{fig3}
\end{figure}

Fig.~\ref{fig4} depicts the outage probability of the secondary transmission in the presence of a large number of Rx antenna (i.e., in the sense of massive MIMO). In the simulation setting, the distance between ST and SR is set to $20$m, which is a typical range of a femtocell deployment \cite{j:femtosurvey}. Similar to the observation from Fig.~\ref{fig3}, the proposed scheme with optimal power allocation outperforms the conventional scheme. In other words, the power-allocation parameter $\lambda$ given by \eqref{langrangecl}, plays a key role to the system performance. In addition, it can be seen from Fig.~\ref{fig4} that the outage probability of the secondary transmission becomes smaller with increasing $N$, which is in agreement with \eqref{sinridistrrr1}. 

\begin{figure}[!t]
\centering
\includegraphics[trim=1.8cm 0.0cm 2.5cm 0cm, clip=true,totalheight=0.28\textheight]{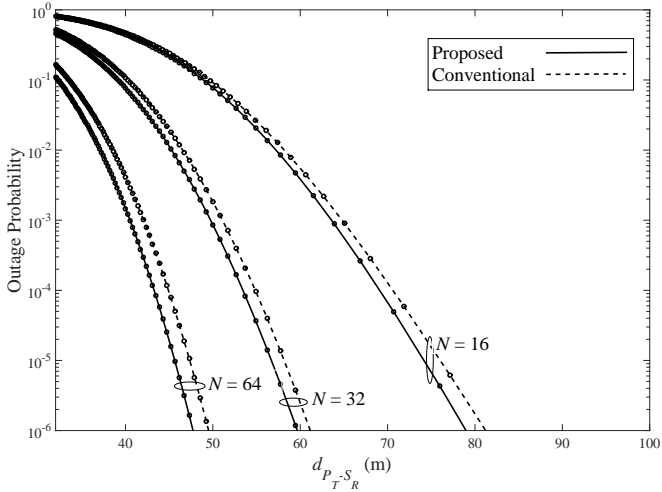}
\caption{Outage probability vs. the distance between PTs and SR, where all the secondary-to-primary links are assumed i.i.d. Also, $L_{T}=L_{R}=2$, $d_{S_{T}-P_{R}}=56$m $d_{S_{T}-S_{R}}=20$m and $M=4$.}
\label{fig4}
\end{figure}

To verify Corollary \ref{cordeterm}, Fig.~\ref{fig6} compares the simulated data rate of each secondary data stream with an equivalent one based on the deterministic SINR given by \eqref{sinridistr1}. Specifically, the average data rate of the $i^{\rm th}$ stream (in bps/Hz) is computed as $\mathcal{R}_{i} = \mathbb{E}[\log_{2}(1+{\rm SINR}_{i})]$ via MC simulations by using \eqref{sinri}. Alternatively, the following semi-analytical approach can be used as
\begin{equation}
\mathcal{R}_{i} 
= \frac{1}{\text{ln}(2)}\int^{\infty}_{0}\frac{1-P^{(i)}_{\text{out}}(x)}{1+x}{\rm d}x,
\end{equation} 
which is more efficient than exhaustive MC simulations. In the case when $N$ and $L_{T}$ approaches infinity, the deterministic SINR of (\ref{sinridistr1}) can be used, such that
\begin{equation}
\mathcal{R}^{(\text{determ.})}_{i}
\triangleq \log_{2}\left(1+\frac{\left(\frac{\lambda \mathbb{E}[X]}{\text{ln}(2)\mathbb{E}[Y]}\right)(N-M+1)}{\frac{p_{\rm p}}{L_{T}}\sum^{L_{T}}_{k=1}\mathbb{E}[Z_{k}]}\right),
\end{equation} 
and
\begin{equation}
\mathcal{R}_{i}-\mathcal{R}^{(\text{determ.})}_{i} \rightarrow 0, \text{ as } N \text{ and } L_{T} \rightarrow +\infty, \ \forall i \in [1, M].
\end{equation}
It is seen from Fig.~\ref{fig6} that the deterministic data rate coincides the actual one (obtained by using the aforementioned semi-analytical approach) when $N = L_{T} \geq 80$ ($M=16$), which corroborates the effectiveness of Corollary 4.

\begin{figure}[!t]
\centering
\includegraphics[trim=1.8cm 0.2cm 2.5cm 0cm, clip=true,totalheight=0.28\textheight]{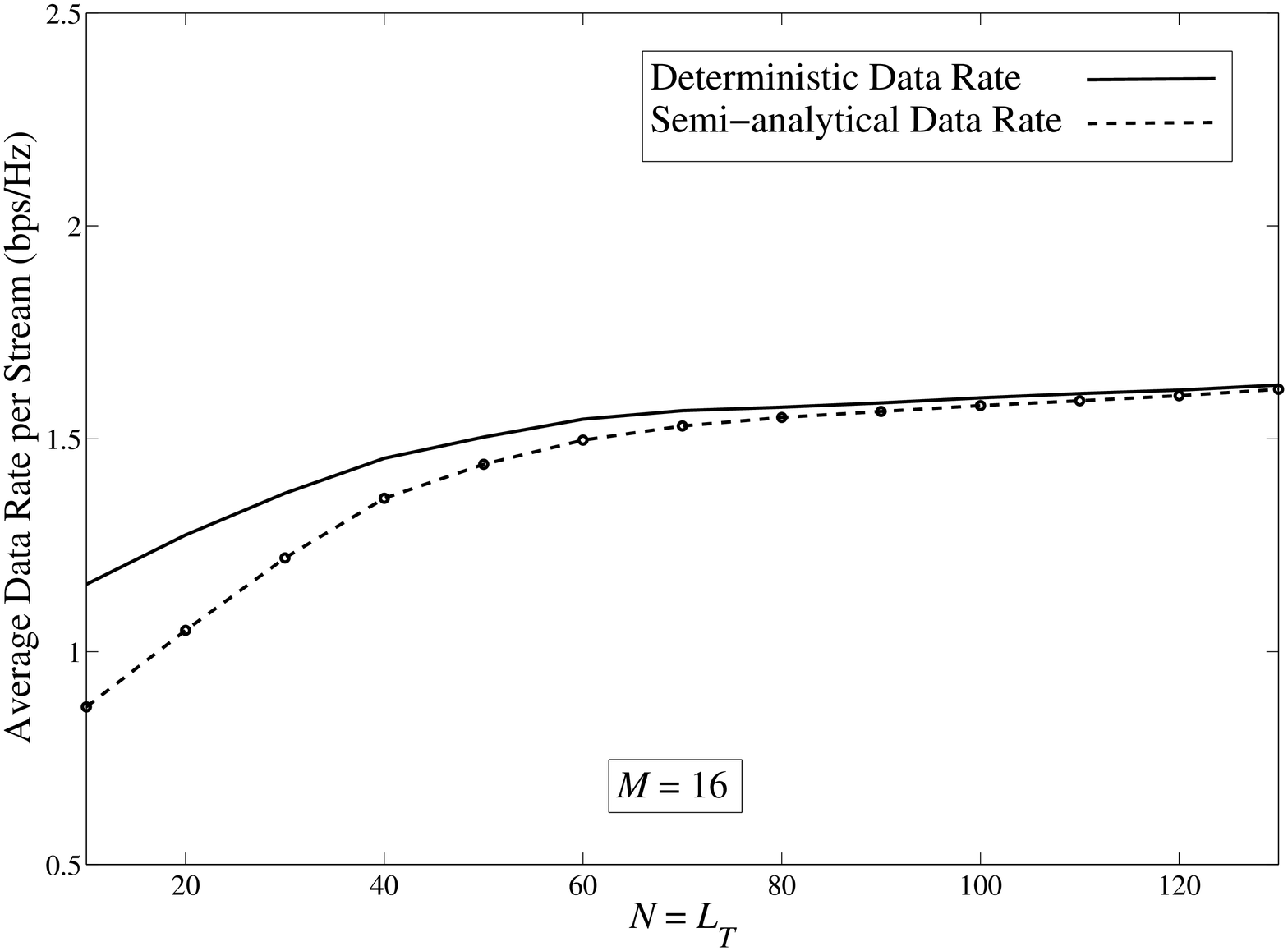}
\caption{Average data rate for each secondary data stream vs. various $\{N=L_{T}\}$ values. The distances between ST and SR are fixed to $15$m, while the distances among the primary and secondary nodes are fixed to $30$m.}
\label{fig6}
\end{figure}

Figure \ref{fig5} shows the average number of active antennas for a secondary link with $M=N=4$ where the proposed antenna reduction algorithm is applied. It is observed that, for a tighter interference leakage constraint (i.e., smaller $\mathcal{T}_{G}$), the average secondary Tx antennas (i.e., $\overline{M}$) decreases in order not to introduce excessive harmful interference to primary users. This is in contrast to the conventional underlay cognitive relaying transmission where it always holds that $\overline{M}=M$, yielding excessive instantaneous interference to the primary users, even though the average interference constraint is satisfied.

\begin{figure}[!t]
\centering
\includegraphics[trim=1.8cm 0.2cm 2.5cm 0cm, clip=true,totalheight=0.28\textheight]{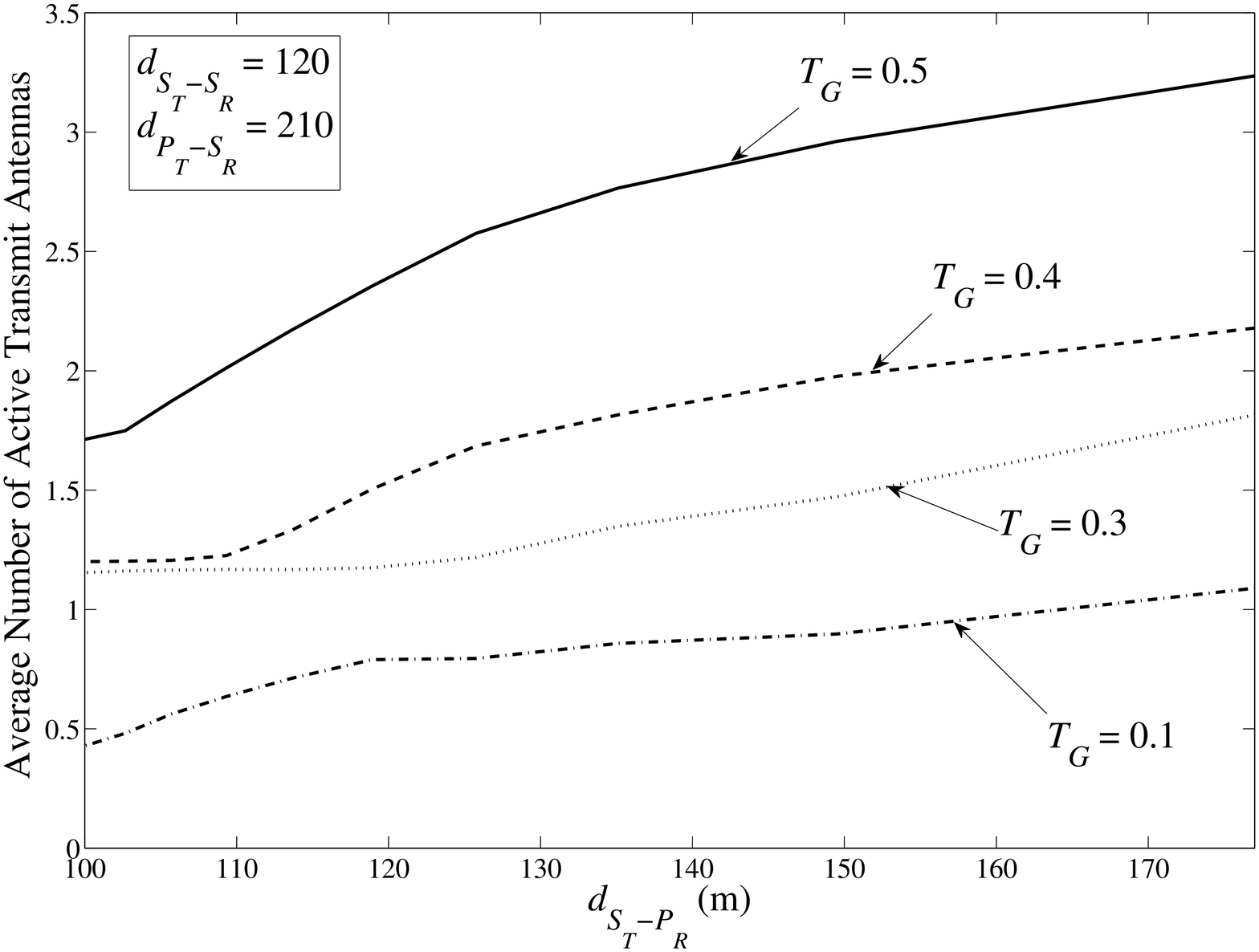}
\caption{Average number of active transmit (secondary) antennas vs. various distances between ST and PR, where all the secondary-to-primary links are assumed i.i.d. Also, $L_{T}=L_{R}=1$ and $M=4$.}
\label{fig5}
\end{figure}

In Fig. \ref{newfig}, the normalized average number of active secondary antennas (i.e., $\overline{M}/M \in [0,1]$) is presented for different system configuration scenarios and for an increasing number of transmit antennas. Obviously, $\overline{M}\rightarrow M$ for higher $M$ values (i.e., when approaching massive MIMO conditions). This occurs due to the channel hardening effect. In other words, the aforementioned ceiling of $p_{i}$ with respect to $\mathbb{E}[Y]$, i.e., (\ref{optproblem1}), gets more tight as $M$ grows, since $Y\rightarrow \mathbb{E}[Y]$ as $M\rightarrow +\infty$. In this case, the effect of unexpected interference leakage to PRs tends to zero (i.e., ${\rm Pr}\left[\mathcal{L}^{(M)} > Q\right]\rightarrow 0^{+}$ as $M\rightarrow +\infty$). However, in practice, $M$ is bounded and the latter effect can play a critical role to the transmission quality of primary service.\footnote{As the right-most part of Fig. \ref{newfig} reveals, the effective number of secondary transmit antennas $M_{E}$ arises from fewer iterations of Algorithm 1 as $M$ increases (on average). Hence, the computational complexity of the proposed scheme is drastically reduced for massive MIMO deployments.} It is also clear from Fig. \ref{newfig} that the aforementioned interference leakage gets more intense for closer primary-to-secondary distances, and vice versa.

\begin{figure}[!t]
\centering
\includegraphics[trim=1.8cm 0.2cm 2.5cm 0cm, clip=true,totalheight=0.28\textheight]{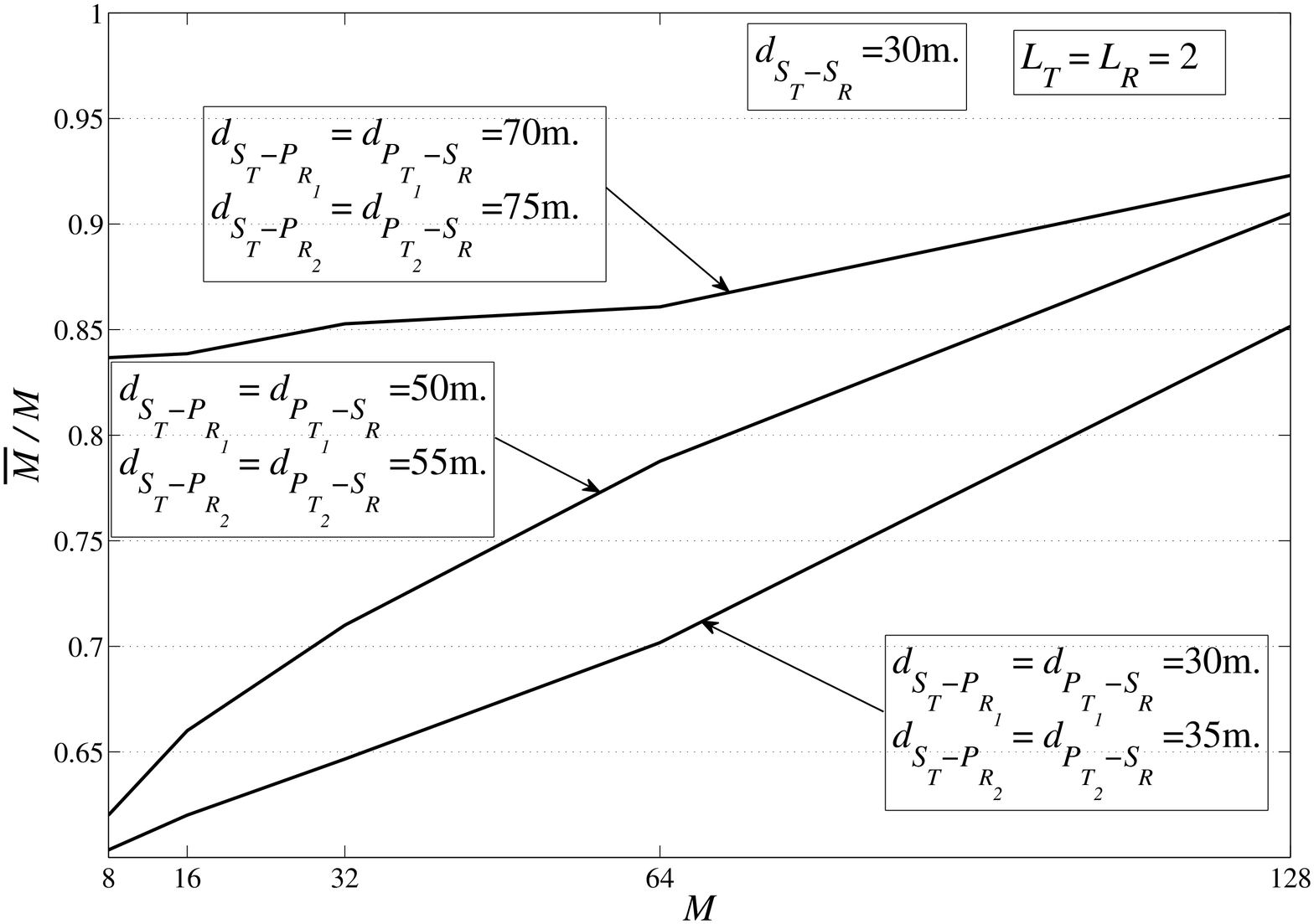}
\caption{Normalized average number of active transmit (secondary) antennas vs. all the available ones, for i.n.i.d. secondary-to-primary links. Also, $\mathcal{T}_{G}=0.1$ is assumed.}
\label{newfig}
\end{figure}

{\color{black}
Finally, Fig.~\ref{fig8} illustrates the beneficial role of the proposed antenna reduction scheme. In particular, the outage performance of the proposed approach (using optimal power allocation, yet with a fixed $M$) is compared with the antenna reduction scheme, which utilizes a versatile $M_{E}$ according to Algorithm~1. A massive MIMO regime is considered for the secondary antenna array, where both $N$ and $M$ have quite high values. For the antenna reduction scheme, the probability of interference leakage regarding the primary system is set to $\mathcal{T}_{G}=0.1$. Obviously, the antenna reduction scheme outperforms the standard scheme with fixed $M$, as expected. This occurs because the effective number of secondary Tx antennas $M_{E}\leq M$ and, thus, the corresponding channel gain (i.e., see \eqref{sinridistr1} and \eqref{sinridistrrrrr1}) becomes  $N-M_{E}+1\geq N-M+1$. Doing so, the resultant $M_{E}$ secondary streams experience higher SINR conditions at the secondary Rx, while mitigating the effect of excessive harmful interference to the primary system at the same time.
}

\begin{figure}[!t]
\centering
\includegraphics[trim=1.8cm 0.2cm 2.5cm 0cm, clip=true,totalheight=0.28\textheight]{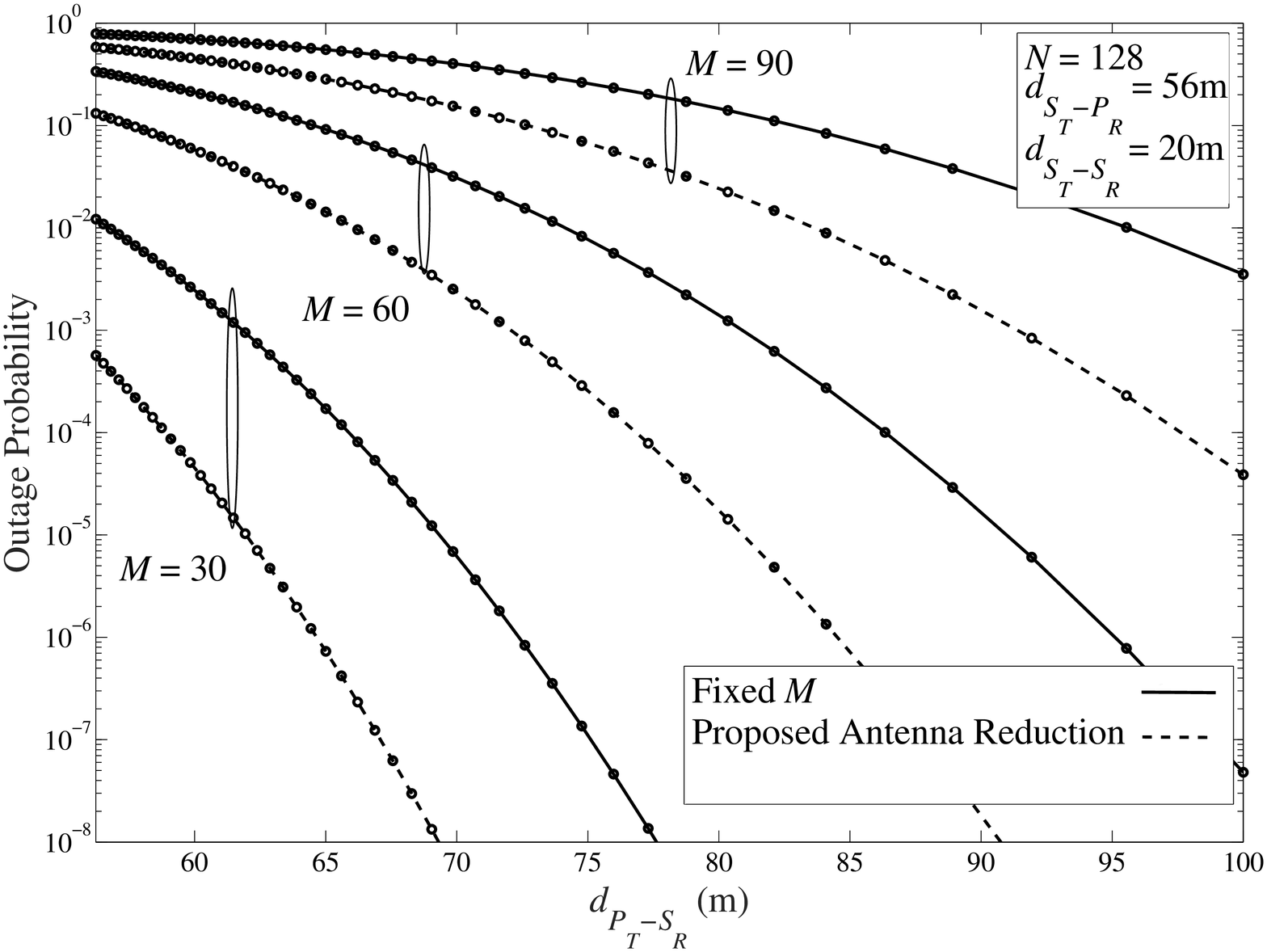}
\caption{Outage probability of the secondary system in the massive antenna array regime vs. various distances between PT and SR, where all the secondary-to-primary links are assumed i.i.d. Also, $\mathcal{T}_{G}=0.1$ is assumed.}
\label{fig8}
\end{figure}

\section{Conclusion}
\label{Conclusion}
The performance of underlay MIMO CR systems was studied, where independent secondary data streams are simultaneously transmitted and received via ZF detection. The analysis included the rather practical scenario of inter-system interference between primary and secondary systems plus AWGN, under independent Rayleigh fading channels. Also, the scenarios of multiple randomly distributed single-antenna and co-located multiple-antenna primary nodes were both considered. An optimal power allocation of the secondary transmission was presented aiming to enhance the received data rate, when only second-order CSI regarding the primary-to-secondary channels is available. Based on this scheme, a new closed-form and exact expression for the outage performance of secondary system was derived. Some special cases of interest were also analyzed, such as the massive MIMO deployments for the secondary and/or primary system. In addition, a new linear and computationally-efficient algorithm was analytically presented, which is able to control the total secondary transmission power so as to better preserve the communication quality of the primary service. The enclosed numerical results verified the accuracy of the analysis as well as the efficacy of the proposed scheme. 

\appendix

\subsection{Derivation of Eq.~\eqref{sinridistr}}
\label{appSINRdistr}
\numberwithin{equation}{subsection}
\setcounter{equation}{0}
From \eqref{sinri}, we have
\begin{equation}
{\rm SINR}_{i} 
= \frac{\left\|\left[\mathbf{G}^{\dagger}\right]_{i}\right\|^{-2}}{p_{\rm p}\frac{\left\|\left[\mathbf{G}^{\dagger}\right]_{i}\mathbf{H}_{\text{p}}\right\|^{2}}{\left\|\left[\mathbf{G}^{\dagger}\right]_{i}\right\|^{2}} + N_{0}}.
\label{sinrii}
\end{equation}
Then, by using \cite[Theorem 1]{j:GOREHeathPaulraj} and \cite[Eq. (10)]{j:MatthaiouZFMassiveMIMO2013}, we obtain
\begin{eqnarray}
\left\|\left[\mathbf{G}^{\dagger}\right]_{i}\right\|^{-2}
& = & \frac{1}{\left[\left(\mathbf{G}^{\mathcal{H}}_{i}\mathbf{G}_{i}\right)^{-1}\right]_{ii}}  \nonumber \\
& = & \frac{\det{\mathbf{G}^{\mathcal{H}}\mathbf{G}}}{\det{\overline{\mathbf{G}}_{i}^{\mathcal{H}}\overline{\mathbf{G}}_{i}}}  \nonumber \\
& = & \mathbf{g}^{\mathcal{H}}_{i}\left(\mathbf{I}_{N}-\overline{\mathbf{G}}_{i}(\overline{\mathbf{G}}^{\mathcal{H}}_{i}\overline{\mathbf{G}}_{i})^{-1}\overline{\mathbf{G}}^{\mathcal{H}}_{i}\right)\mathbf{g}_{i},
\label{zfg}
\end{eqnarray}
where $\overline{\mathbf{G}}_{i}$ stands for the deflated version of $\mathbf{G}$ by removing its $i^{\rm th}$ column.

Let $\mathbf{Q}_{i} \triangleq \mathbf{I}_{N}-\overline{\mathbf{G}}_{i}\left(\overline{\mathbf{G}}^{\mathcal{H}}_{i}\overline{\mathbf{G}}_{i}\right)^{-1}\overline{\mathbf{G}}^{\mathcal{H}}_{i}$. Clearly, $\mathbf{Q}_{i}$ is a $N \times N$ square matrix and represents the projection onto the null space of $\overline{\mathbf{G}}^{\mathcal{H}}_{i}$. In addition, $\mathbf{Q}_{i}$ is Hermitian and idempotent.\footnote{Note that $\overline{\mathbf{G}}_{i}\left(\overline{\mathbf{G}}^{\mathcal{H}}_{i}\overline{\mathbf{G}}_{i}\right)^{-1}\overline{\mathbf{G}}^{\mathcal{H}}_{i}$ is a $N \times N$ symmetric and idempotent (projection) matrix. Thus, $\mathbf{Q}_{i}$ is also an idempotent matrix \cite{b:freedman2009statistical}.} As a result, its eigenvalues are either zero or one and it has a rank of $N-M+1$. In other words, we know that
\begin{equation}
\text{Eigenvalues of }\mathbf{Q}_{i}: \underbrace{0, 0, \cdots, 0}_{M-1}, \underbrace{1, 1, \cdots, 1}_{N-M+1}.
\label{eigen}
\end{equation}
Next, by performing eigenvalue decomposition over $\mathbf{Q}_{i}$, the expression \eqref{zfg} can be rewritten as
\begin{equation}
\left\|\left[\mathbf{G}^{\dagger}\right]_{i}\right\|^{-2} 
= \mathbf{g}^{\mathcal{H}}_{i}\mathbf{Q}_{i}\mathbf{g}_{i} 
= \mathbf{g}^{\mathcal{H}}_{i}\mathbf{U}_{i}\mathbf{\Lambda}_{i}\mathbf{U}^{\mathcal{H}}_{i}\mathbf{g}_{i},
\label{y2}
\end{equation}
where $\mathbf{U}_{i}$ is a unitary matrix and $\mathbf{\Lambda}_{i} = \diag\{\lambda_{1},\cdots,\lambda_{N}\}$ corresponds to the eigenvalues of $\mathbf{Q}_{i}$. Finally, by virtue of \eqref{eigen} and recalling the isotropic property of zero-mean Gaussian vectors \cite[Chapter 1]{b:multivariate}, i.e., $\mathbf{U}^{\mathcal{H}}_{i}\mathbf{g}_{i}\overset{\text{d}}=\mathbf{g}_{i}$, (\ref{y2}) becomes
\begin{equation}
\left\|\left[\mathbf{G}^{\dagger}\right]_{i}\right\|^{-2}
= \sum^{N}_{i=1}\lambda_{i}(\mathbf{U}^{\mathcal{H}}_{i}\mathbf{g}_{i})^{\mathcal{H}}(\mathbf{U}^{\mathcal{H}}_{i}\mathbf{g}_{i})\overset{\text{d}}
= \sum^{N}_{i=1}\lambda_{i}\mathbf{g}^{\mathcal{H}}_{i}\mathbf{g}_{i}
\triangleq p_{i}X_{i}.
\label{y3}
\end{equation}
Notice that $p_{i} X_{i}$ in \eqref{y3} does not reflect the actual value of $\|[\mathbf{G}^{\dagger}]_{i}\|^{-2}$, yet denotes equality in distribution, which is sufficient for subsequent performance analysis.

On the other hand, given $\mathbf{G}^{\dagger}$, $[\mathbf{G}^{\dagger}]_{i}\mathbf{H}_{\text{p}}/\left\|\left[\mathbf{G}^{\dagger}\right]_{i}\right\|$ is a Gaussian vector of $L_{T}$ i.n.i.d. RVs, which is independent of $[\mathbf{G}^{\dagger}]_{i}$ \cite[Theorem 1.5.5]{b:multivariate}. Hence, $Z \triangleq \left\|\left[\mathbf{G}^{\dagger}\right]_{i}\mathbf{H}_{\text{p}}\right\|^{2}/\left\|\left[\mathbf{G}^{\dagger}\right]_{i}\right\|^{2}$ in the denominator of \eqref{sinrii} is equivalently composed of $L_{T}$ i.n.i.d. exponential RVs, whose statistical means reflect the path losses between PTs and SR.

\subsection{Derivations of Eqs.~\eqref{EY} and \eqref{EZ}}
\label{appEYderiv}
\numberwithin{equation}{subsection}
\setcounter{equation}{0}
The interfering channel gain from the $i^{\rm th}$ antenna of ST to the $l^{\rm th}$ single-antenna PR is distributed as $|y^{(l)}_{i}|^{2}\overset{\text{d}} = \exp\left(-\mathbb{E}[Y^{(l)}_{i}]\right)$, with $\mathbb{E}[Y^{(l)}_{i}]$ being its mean. As aforementioned, $|y^{(l)}_{i}|^{2}$, $\forall i\in [1, M]$, are identically distributed, such that $\mathbb{E}[Y^{(l)}_{i}]\triangleq \mathbb{E}[Y^{(l)}]$. Let $Y \triangleq \max_{j}\{y^{(j)}\}^{L_{R}}_{j=1}$, we have that
\begin{equation}
\mathbb{E}[Y]
= \int^{\infty}_{0} {y f_{Y}(y)}{\rm d}y,
\label{EEEY}
\end{equation}
where it follows from \cite[Eq. (A.5)]{DBLP:journals/corr/MiridakisTAD15} that
\begin{eqnarray}
\nonumber
f_{Y}(y)
& \hspace{-5pt} = \hspace{-5pt} &\sum^{L_{R}}_{l=1}\sum^{L_{R}}_{k=0}\underbrace{\sum^{L_{R}}_{n_{1}=1}\cdots \sum^{L_{R}}_{n_{k}=1}}_{n_{1}\neq \cdots \neq n_{k}\neq k}\frac{(-1)^{k}}{k!\mathbb{E}[Y^{(l)}]}\\
&    &{}\times \exp\left(-\left(\frac{1}{\mathbb{E}[Y^{(l)}]}+\sum^{k}_{t=1}\frac{1}{\mathbb{E}[Y^{(n_{t})}]}\right)y\right).
\label{fY}
\end{eqnarray}
Notice that $Y$ stands for the maximal value of $L_{R}$ independent yet non-identical exponentially distributed RVs (due to arbitrarily different distances between ST and $L_{R}$ PRs). Substituting \eqref{fY} into \eqref{EEEY} and performing some algebraic manipulations, the desired result in \eqref{EY} can be derived.

Moreover, it holds from \cite[Eq. (5)]{j:rayleighinid} that
\begin{equation}
f_{Z}(z) 
= \sum^{L_{T}}_{k=1}\left(\prod^{k}_{j=1,j\neq k}\frac{\mathbb{E}[Z_{k}]}{\left(\mathbb{E}[Z_{k}]-\mathbb{E}[Z_{j}]\right)}\right)\frac{\exp\left(-\frac{z}{\mathbb{E}[Z_{k}]}\right)}{\mathbb{E}[Z_{k}]},
\label{fZ}
\end{equation}
which yields \eqref{EZ} by using that $\mathbb{E}[Z] = \int^{\infty}_{0}z f_{Z}(z) {\rm d}z$.

\subsection{Derivation of Eq.~\eqref{langrangecl}}
\label{applagrange}
\numberwithin{equation}{subsection}
\setcounter{equation}{0}
From \eqref{langrange}, we know that
\begin{align}
\nonumber
&\mathbb{E}_{X_{i}}[p_{i}] = \min\left\{\frac{Q}{M\mathbb{E}[Y]},\frac{p_{\max}}{M}\right\}
\Longleftrightarrow \\ 
\nonumber
&\int^{\infty}_{\mathcal{C}}\left(\frac{\lambda}{\text{ln}(2)\mathbb{E}[Y]}-\frac{\left(p_{\rm p}\mathbb{E}[Z]+N_{0}\right)}{x}\right)f_{X_{i}}(x){\rm d}x\\
& = \min\left\{\frac{Q}{M\mathbb{E}[Y]},\frac{p_{\max}}{M}\right\}.
\label{langrangee}
\end{align}
Note that the parameter $\min\{Q/(M\mathbb{E}[Y]),p_{\max}/M\}$ is used for clipping. Particularly, it ensures that in the case of very far-distant primary nodes (i.e., when $\mathbb{E}[Y]\rightarrow 0^{+}$), $p_{\max}/M$ is used for an appropriate upper bound on the secondary transmission at each antenna, as in conventional (non-cognitive) MIMO systems. In addition, the aforementioned integration limit follows the restriction
\begin{equation}
0 \leq p^{\star}_{i} \Longleftrightarrow \mathcal{C} \leq X_{i}, \forall i \in [1, M].
\end{equation}
Therefore, noticing from (\ref{xidistr}) that $X_{i}$ follows an Erlang distribution with shape parameter $N-M+1$ and scale parameter $\mathbb{E}[X]$, inserting \cite[Eq. (3.381.3)]{tables} into \eqref{langrangee} and performing some algebraic manipulations, we attain \eqref{langrangecl}.

\subsection{Derivation of Eq.~\eqref{uncpdfsnr}}
\label{apppdfsnr}
\numberwithin{equation}{subsection}
\setcounter{equation}{0}
The CDF of $p^{\star}_{i}X_{i}$ is explicitly defined as
\begin{align}
\nonumber
F_{p^{\star}_{i}X_{i}}(\gamma)&\triangleq \text{Pr}\left[\left(p^{\star}_{i}X_{i}\right)\leq \gamma\right]\\
&=\text{Pr}\left[X_{i}\leq \frac{\text{ln}(2)\mathbb{E}[Y]\gamma}{\lambda}+\mathcal{C}\right],\ \ \gamma>0.
\label{Fsnri}
\end{align}
According to \eqref{xidistr}, the corresponding PDF and CDF of $X_{i}$ are, respectively, given by
\begin{align}
f_{X_{i}}(x)=\frac{x^{N-M}\exp\left(-\frac{x}{\mathbb{E}[X]}\right)}{(N-M)!\mathbb{E}[X]^{N-M+1}},
\label{fxi}
\end{align}
and
\begin{align}
F_{X_{i}}(x)=1-\exp\left(-\frac{x}{\mathbb{E}[X]}\right)\sum^{N-M}_{l=0}\frac{\left(\frac{x}{\mathbb{E}[X]}\right)^{l}}{l!}.
\label{fsnri}
\end{align}
With the resulting \eqref{Fsnri} and \eqref{fsnri}, \eqref{uncpdfsnr} can be readily obtained.

\subsection{Derivation of Eq.~\eqref{poutcl}}
\label{apppoutcl}
\numberwithin{equation}{subsection}
\setcounter{equation}{0}
The CDF of the received SINR pertaining to the $i^{\rm th}$ secondary data stream can be given by
\begin{eqnarray}
F_{{\rm SINR}_{i}}(\gamma)
& = &\text{Pr}\left[\frac{p^{\star}_{i}X_{i}}{p_{\rm p}Z+N_{0}}<\gamma\right] \nonumber \\
& = &\text{Pr}\left[p^{\star}_{i}X_{i}<\gamma (p_{\rm p}Z+N_{0})\right] \nonumber\\
& = &\int^{\infty}_{0}F_{p^{\star}_{i}X_{i}}\left(\gamma (p_{\rm p}z + N_{0})\right)f_{Z}(z){\rm d}z.
\label{cdfsinr}
\end{eqnarray}
Substituting \eqref{uncpdfsnr} and \eqref{fZ} into \eqref{cdfsinr} and utilizing \cite[Eq. (3.382.4)]{tables} as well as noticing that $P^{(i)}_{\rm out}(\gamma_{\rm th})\triangleq F_{{\rm SINR}_{i}}(\gamma_{\rm th})$, we arrive at \eqref{poutcl}.

\bibliographystyle{IEEEtran}
\bibliography{IEEEabrv,References}

\vfill

\end{document}